\newtheorem{thm}{Theorem}
\newtheorem{lem}{Lemma}
\newtheorem*{remark}{Remark}
\begin{document}

\title{\LARGE \textbf{Design and Analysis of Deadline and Budget Constrained Autoscaling (DBCA) Algorithm for 5G Mobile Networks}}
\author{Tuan Phung-Duc\IEEEauthorrefmark{2}, Yi Ren\IEEEauthorrefmark{1}, Jyh-Cheng Chen\IEEEauthorrefmark{1}, \textit{Fellow, IEEE}, and~Zheng-Wei Yu\IEEEauthorrefmark{1}\\
	\IEEEauthorrefmark{2}Faculty of Engineering, Information and Systems, University of Tsukuba, Ibaraki, Japan\\
		\IEEEauthorrefmark{1}Department of Computer Science, National Chiao Tung University, Hsinchu,  Taiwan, R.O.C.\\
	Emails:\IEEEauthorrefmark{2}tuan@sk.tsukuba.ac.jp, \IEEEauthorrefmark{1}\{renyi, jcc, zwyu12260\}@cs.nctu.edu.tw}

\maketitle

\begin{abstract}
In cloud computing paradigm, virtual resource autoscaling approaches have been intensively studied recent years. Those approaches dynamically scale in/out virtual resources to adjust  system performance for saving operation cost. However,  designing the autoscaling algorithm for desired performance with limited budget, while considering the existing capacity of legacy network equipment, is not a trivial task.  In this paper, we propose a Deadline and Budget Constrained Autoscaling (DBCA) algorithm for addressing the budget-performance tradeoff. We develop an analytical model to quantify the tradeoff and cross-validate the model  by extensive simulations. The results show that the DBCA can significantly improve system performance given the budget upper-bound. In addition, the model  provides a quick way to evaluate the budget-performance tradeoff and system design without wide deployment, saving on cost and time.
\end{abstract}

\IEEEpeerreviewmaketitle

\begin{IEEEkeywords}
Autoscaling Algorithm, Modeling and Analysis, Network Function Virtualization, 5G, Cloud Networks, Virtualized EPC
\end{IEEEkeywords}
\section{Introduction}

The emergence of Network Functions Virtualization (NFV) is changing the way of how mobile operators increase the capacities of their network infrastructures.  NFV offers fine-grained on-demand adjustment of network capabilities. Virtualized Network Functions (VNFs) can be scaled-out/in (turn on/off) to adjust  computing and network capabilities for saving energy and resources. A classic case is Animoto, an image-processing service provider, experienced a demand surging from 50 VM instances to 4,000 instances in three days, April 2008. After the peak, the demand fell sharply to an average level. Animoto only paid for 4,000 instances for the peak time~\cite{Animoto}.

Designing good \emph{autoscaling} strategies for budget constraints while meeting performance requirements is challenging.
In particular, operation cost is decreased by reducing the number of power-on VNF instances. On the other hand, resource under-provisioning may cause Service Level Agreements (SLAs) violations, leading to low Quality of user Experience (QoE). Therefore, the goal of desired autoscaling strategies is to meet the budget constraint while maintaining an acceptable level of performance. Then, a budget-performance tradeoff is formed:  The system performance is improved by adding more VNF instances while operation cost is reduced by the opposite way.

Designing autoscaling strategies for 5G mobile networks is different from that for traditional cloud computing scenarios. Specifically, in previous cloud autoscaling schemes (e.g.,~\cite{xiao2013dynamic,jokhio2013prediction,roy2011efficient,tirado2011predictive,  niu2012quality,calheiros2014workload, islam2012empirical,bankole2013cloud,  shen2011cloudscale,khan2012workload,gandhi2013exact,phung2014exact} ), only virtualized resources are considered. This is not suitable for typical cellular networks.  Given widely deployed existing legacy network equipment, the desired solution should consider the capacities of both legacy network equipment and VNFs. For example,  consider VNF only case that a VNF scaling-out from 1 VNF instance to 2 VNF instances increases 100\% capacity. Whereas, its capacity only grows less than 1\% if legacy network equipment (say 100 VNF instance capability) is counted. Current cloud autoscaling schemes usually ignore the non-constant issue.

In this paper, we investigate the budget-performance tradeoff in terms of deadline constraint, VM setup time, and the legacy equipment capacity.  We improve our recent work~\cite{YiRen2016globecom} by further considering deadline constraint for incoming requests, i.e., a request will be dropped if a pre-specified timer is expired. This is a more practical assumption compared with that in~\cite{YiRen2016globecom}, in which no deadline constraint is considered. To the best of our knowledge, this is the first work from this perspective. We then propose a Deadline and Budget Constrained Autoscaling (DBCA) algorithm for addressing the tradeoff. The DBCA considers available legacy equipment powered on all the time, while virtualized resources are divided into $k$ VNF instances. Then the DBCA scales out/in (turns on/off) VNF instances depending on job arrivals. Here, a central issue is how to choose a suitable $k$ for balancing the tradeoff. We then derive a detailed analytical model to answer this question. The analytical model quantifies the budget-performance tradeoff and cross-validates against extensive ns2 simulations. Furthermore, we propose a recursive approach to reduce the complexity of the computational procedure from $O(k^3K^3)$ to $O(kK)$ where $K$ the system capacity. Our model provides mobile operators with guidelines to design optimal VNF autoscaling strategies by their management policies in a systematical way, and enable wide applicability in various scenarios, and therefore, have important theoretical significance.

The rest of this paper is organized as follows. Section~\ref{sec:Related_Work} reviews the related work. Section~\ref{sec:Background} briefly introduces some background material on mobile networks and NFV architecture.
Section~\ref{sec:Proposed_Algorithm} presents the proposed optimal algorithm for VNF autoscaling applications. Section~\ref{sec:Analytical_Model} addresses the analytical models, followed by numerical results illustrated in Section~\ref{sec:Simulation_Results}.   Section~\ref{sec:Conclusions} offers conclusions.

\section{Related Work} \label{sec:Related_Work}
Recent years, autoscaling mechanisms have been intensively studied~\cite{xiao2013dynamic,jokhio2013prediction,roy2011efficient,tirado2011predictive,  niu2012quality,calheiros2014workload, islam2012empirical,bankole2013cloud, gandhi2013exact,phung2014exact, YiRen2016globecom, shen2011cloudscale,khan2012workload,mitrani2013managing,Mitrani20111222, mitrani2013trading, hu2015power,phung2015multiserver}. A straightforward and commonly used autoscaling approach is that autoscaling decisions are made based on resource utilization indicators (e.g., CPU, memory usage, etc). An example is  the default autoscaling approaches offered by Amazon EC2 and Microsoft Azure. A scale-out request is sent right way if the current CPU usage exceeds a predefined threshold. However, specifying the threshold value is not easy while considering VM setup time. Indeed, the setup lag time could be as long as 10 min or more to start an instance in Microsoft Azure; and the lag time could be various from time to time~\cite{hill2010early}. Thus it may happen that the instance is too late to serve the VNF so that one needs to leave more redundant while setting the threshold. To handle the setup time,  prediction/learning models are utilized to estimate the workload arrivals for autoscaling decision making, such as Exponential weighted Moving Average (EMA)~\cite{xiao2013dynamic,jokhio2013prediction}, Auto-Regressive Moving Average (ARMA)~\cite{roy2011efficient,tirado2011predictive}, Auto-Regressive Integrated Moving Average (ARIMA)~\cite{niu2012quality,calheiros2014workload}, machine learning~\cite{islam2012empirical, bankole2013cloud}, Markov model~\cite{shen2011cloudscale, khan2012workload}, recursive renewal reward~\cite{gandhi2013exact}, and matrix analytic method~\cite{phung2014exact}. However, \textit{the mechanisms~\cite{xiao2013dynamic,jokhio2013prediction,roy2011efficient,tirado2011predictive,  niu2012quality,calheiros2014workload, islam2012empirical,bankole2013cloud,  shen2011cloudscale,khan2012workload,gandhi2013exact,phung2014exact}  only consider virtualized resource itself (cloud resource) while overlooking legacy (fixed) resources}, which are not suitable for typical cellular networks.

Perhaps the closest models to ours were studied in~\cite{YiRen2016globecom, mitrani2013managing,Mitrani20111222, mitrani2013trading, hu2015power,phung2015multiserver} that both the capacities of fixed legacy network equipment and dynamic autoscaling cloud servers are considered. The authors in~\cite{mitrani2013managing,Mitrani20111222} consider setup time without defections~\cite{mitrani2013managing} and with defections~\cite{Mitrani20111222}. Our recent work~\cite{hu2015power} relaxes the assumption in~\cite{mitrani2013managing,Mitrani20111222} that after a setup time, all the cloud servers in the block are active concurrently. We further consider a more realistic model that each server has an independent setup time. However, in~\cite{mitrani2013managing,Mitrani20111222, hu2015power}, all the cloud servers were assumed as a whole block, which is not practical where each cloud server should be allowed to scale-out/in dynamically. Considering all cloud servers as a whole block was relaxed to sub-blocks in~\cite{mitrani2013trading,phung2015multiserver}. However, either setup time is ignored~\cite{mitrani2013trading}, or fixed legacy network capacity is not considered~\cite{phung2015multiserver}. Our recent work~\cite{YiRen2016globecom} fixes the research gap, whereas job deadline constraint is not considered.

\section{Background} \label{sec:Background}
\begin{figure}
	\centering
	\includegraphics[width=9cm]{./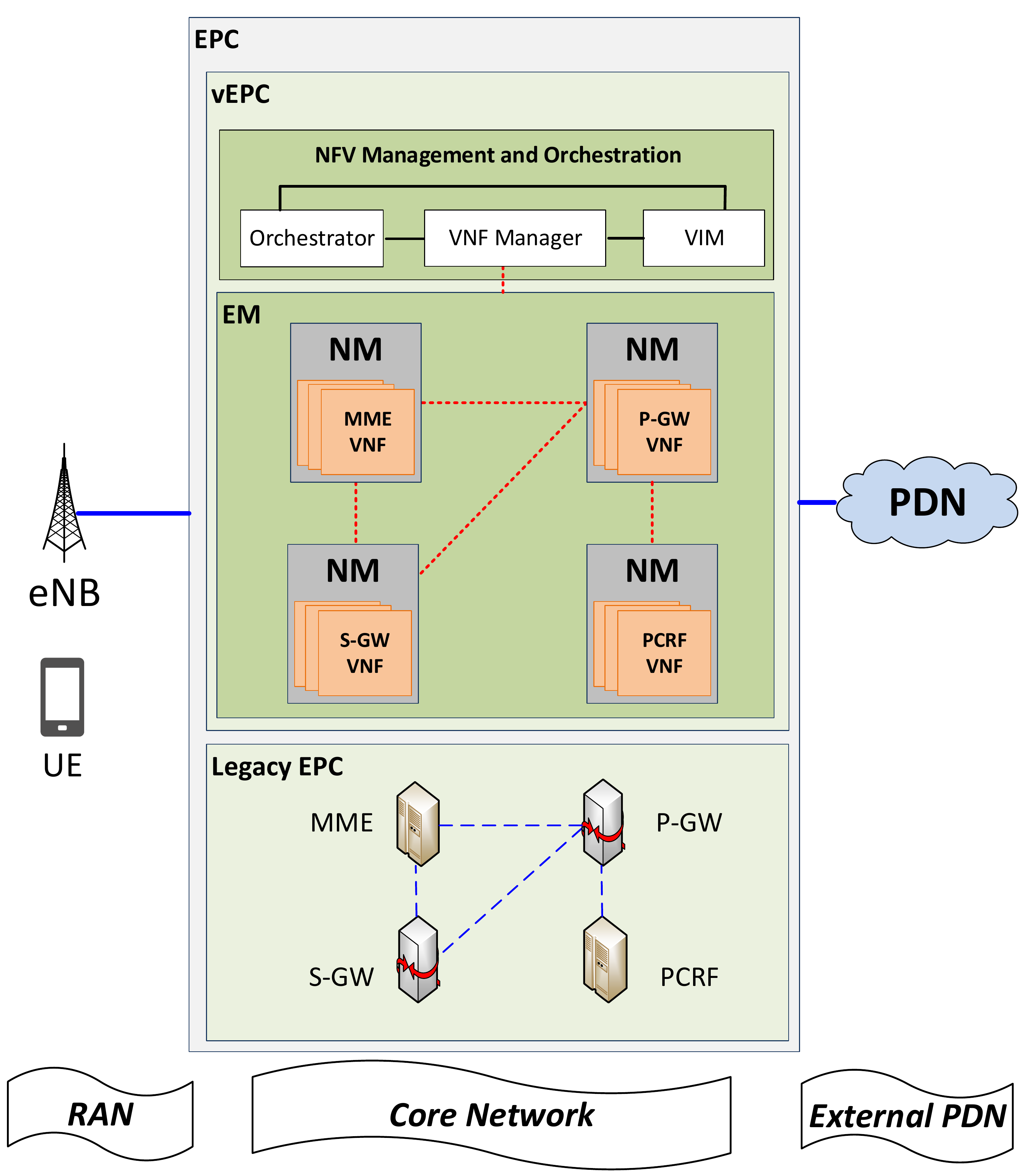}
	\caption{A simplified example of NFV enabled LTE architecture.}
	\label{fig:vEPC_arch}
	\vspace{-5mm}
\end{figure}
Mobile Core Network (CN) is one of the most important parts in mobile networks. The main target of NFV is to virtualize  the functions in the CN. The most recent CN is the Evolved Packet Core (EPC) introduced in Long Term Evolution (LTE). Here, we use an example to explain EPC and virtualized EPC (vEPC) when NFV is deployed. Fig.~\ref{fig:vEPC_arch} shows a simplified example of NFV enabled LTE architecture consisted of Radio Access Network (RAN), EPC, and external Packet Data Network (PDN). In particular,  the EPC is composed of legacy EPC and vEPC. In the following, we brief introduce them respectively.

\subsection{Legacy EPC and vEPC}
EPC is the CN of the LTE system. Here, we only show basic network functions, such as Serving Gateway (S-GW), PDN Gateway (P-GW), Mobility Management Entity (MME), and Policy and Charging Rules Function (PCRF) in the EPC.


To virtualize the above network functions, 3GPP introduces NFV management functions and solutions for  vEPC based on ETSI NFV specification~\cite{V13.1.02015}, as shown in Fig.~\ref{fig:vEPC_arch}. The network functions (e.g., MME, PCRF) are denoted as Network Elements (NE), which are virtualized as VNF instances. Network Manager (NM) provides end-user functions for network management of NEs. Element Manager (EM) is responsible for the management of a set of NMs. NFV management and orchestration controls VNF instance scaling procedure, which are detailed as follows.
\begin{itemize}
	\item VNF scale-in/out: VNF scale-out adds additional VMs to support a VNF instance, adding more virtualized hardware resources (i.e., compute, network, and storage capability) into the VNF instance. In contrast, VNF scale-in removes existing VMs from a VNF instance.
	\item VNF scale-up/down: VNF scale-up allocates more hardware resources into a VM for supporting a VNF instance (e.g., replace a One-core with Dual-core CPU). Whereas, VNF scale-down releases hardware resources from a VNF instance.
\end{itemize}

\section{Proposed Deadline and Budget Constrained Autoscaling Algorithm} \label{sec:Proposed_Algorithm}

\subsection{System Model and DBCA: Deadline and Budget Constraint Autoscaling} \label{ssec:DBCA}
In general, we consider that a 5G EPC consists of legacy network entities (e.g., MME, PCRF) and VNFs~\cite{hawilo2014nfv,chih2014toward}. For a network entity, its capacity is supported by both legacy network equipment and VNF instances. Fig.~\ref{fig:Queueing_model_special_case} illustrates a simplified example of a network entity queueing model considering the capacities of both VNF instances and legacy network equipment. Specifically, the capacity of its legacy network equipment is assumed to be $n_0$ VNF instance capacities while $k$ denotes the number of VNF instances for supporting the network entity. That is, the total capacity of the network entity is $k+n_0=N$.

From the network entity's point of view, we assume that user requests arrive according to a Poisson process with rate $\lambda$. The capacity of a VNF instance is assumed to accept one job at a time and the service time is assumed to follow the exponential distribution with mean $1/\mu$. When a user request arrives, the job first enters a limited First-Come-First-Served (FCFS) queue waiting for processing. Each job has deadline constraint, which is a random variable following the exponential distribution with mean $1/\theta$. In other words, the job will quit the queue if its waiting time exceeds its deadline. Without loss of generality, the legacy network equipment is always on while VNF instances will be powered on (or off) according to the number of waiting jobs in the queue. Moreover, a VNF instance needs a setup time to be available to serve user requests, which is assumed to be an exponentially distributed random variable with mean value $1/\alpha$.

\begin{figure}[t]
	\centering
	\includegraphics[width=8.7cm]{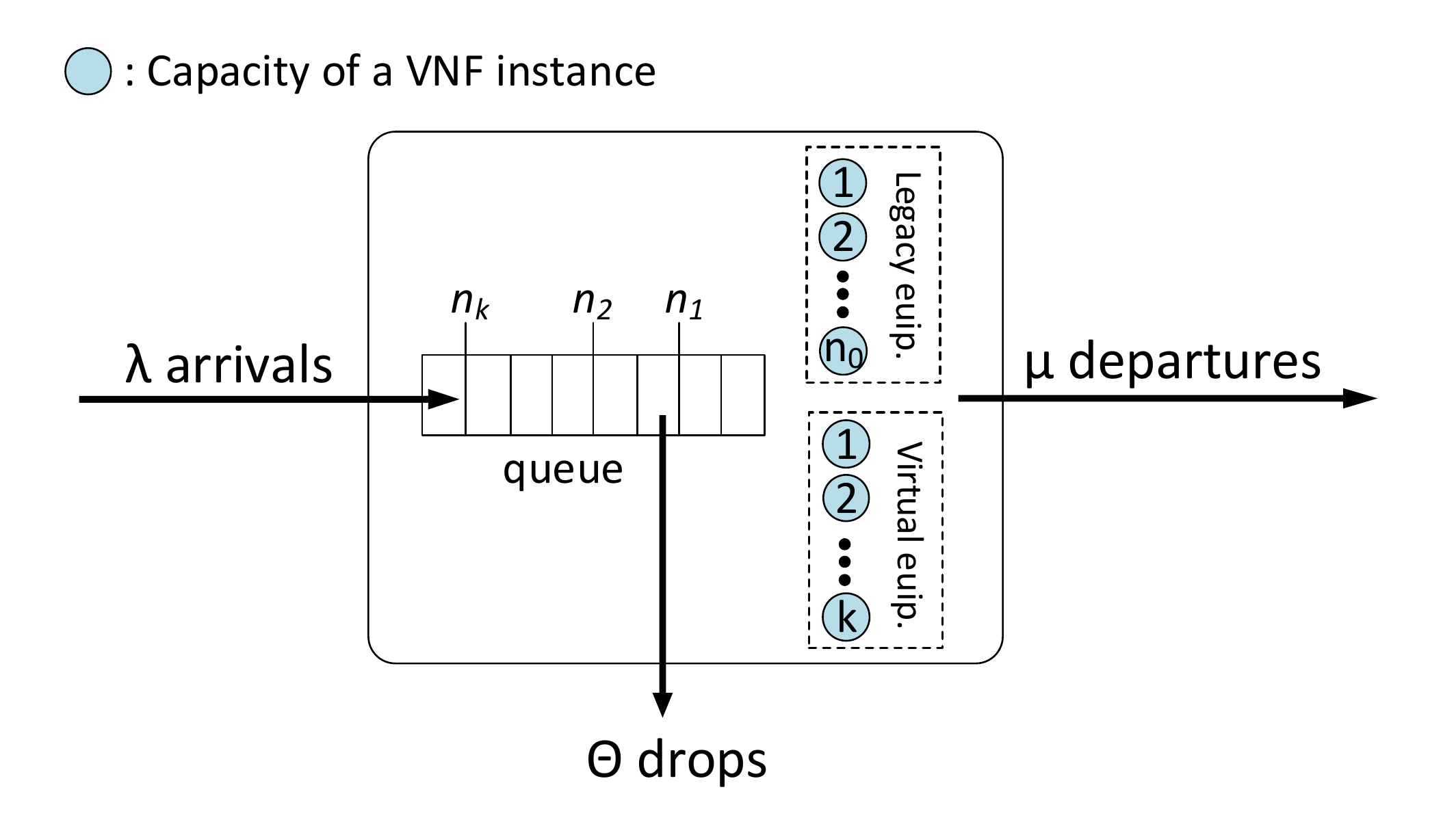}
	\caption{A service center with reserve blocks.}
	\label{fig:Queueing_model_special_case}
	\vspace{-3mm}
\end{figure}

 DBCA utilizes two thresholds, '\texttt{Up}' and '\texttt{Down}', or $U_i$ and $D_i$ to control the VNF instances $i=1,2,\cdots, k$.   Further, let $n_1=n_0+1$ and $n_i=n_{i-1}+1$ ($i=1,2,\cdots, k$), i.e., $n_k=N$. In other words, DBCA sends orders to NFV management and orchestration to turn on/off VNF instances to adjust network capacities.
 \begin{itemize}
 	\item \textit{$U_i$, power up the $i$-th VNF instances:}  If the $i$-th VNF instance is turned off and the number of requests in the system increases from $U_i-1$ to $U_i$, then the VNF instance is powered up after a setup time to support the system.  During the setup time, a VNF instance cannot serve user requests, but consumes power (or money for renting cloud services). Here, we specify $U_i=n_i$. It is equivalent to that when the number of requests increases from $n_{i-1}$ to $n_{i}$, the $i$-th VNF instance is powered up.
 	\item \textit{$D_i$, power down the $i$-th VNF instances:}  If the $i$-th VNF instance is operative, and the number of requests in the system drops from $D_i+1$ to $D_i$, then the VNF instance is powered down instantaneously. In this paper, we choose $D_i = n_{i-1}$. It is equivalent to that when the number of requests drops from $n_i$ to $n_{i-1}$, we turn off the $i$-th VNF instance.
 \end{itemize}

\subsection{Performance Metrics} \label{ssec:PerformMetrics}
The system performance is evaluated by four metrics: the average response time in the queue $W_q$, the average number of running VNF instance $S$, user request blocking probability $P_b$, and user request dropping probability $P_d$.  We define them as follows.
\begin{itemize}
  \item \textit{The average response time in queue $W_q$} is defined as a job request's waiting time in queue. In other words, it reveals how long time a job request can be served.
  \item \textit{The average number of running VNF instances $S$} addresses the operation cost of virtual equipment.
  \item \textit{Dropping probability $P_d$}  is the probability that a request's waiting time in queue exceeds its deadline constraint.
  \item \textit{Blocking probability $P_b$}  is the probability that a request is denied due to system busy.
\end{itemize}

The closed-form solutions of $W_q$, $S$, $P_b$, and $P_d$ are given as (\ref{eq:W_q}), (\ref{eq:S}), (\ref{eq:Pb}), and (\ref{eq:Pd}) in Section~\ref{sec:Analytical_Model}. Thus, the system performance $P$ has the form
\begin{align}\label{eq:SysPerform}
	P=w_1W_q+w_2S+w_3P_b+w_4P_d,
\end{align}
where coefficients $w_1$, $w_2$, $w_3$, and $w_4$ denote the weight factors for $W_q$, $S$, $P_b$, and $P_d$,  respectively. Increasing $w_1$ (or $w_2$, $w_3$, $w_4$) emphasizes more on $W_q$ (or $S$, $P_b$, $P_d$). Here, we do not specify either $w_1$ or $w_2$ ($w_3$, $w_4$) due to the fact that such a value should be determined by a mobile operator and must take management policies into consideration.

\section{Analytical Model} \label{sec:Analytical_Model}

\begin{table}[t]
	\small
	\caption{List of Notations}
	\centering  \label{tab:ParameterSetting}
	\begin{tabular}{|c|p{6.7cm}|}
		\hline
		\textbf{Notation} & \textbf{Explanation }                                         \\ \hline\hline
		$N$           & The total capacities of a network entity                            \\ \hline
		$K$            & The number of maximum jobs can be accommodated in the system \\ \hline
		$k$                & The number of VNF instances   \\ \hline
		$P$            & System performance \\ \hline
		$W$            & Average response time  \\ \hline
		$W_q$            & Average response time in queue  \\ \hline
		$S$                & Average VM cost  \\ \hline
		$P_b$      & Blocking probability \\ \hline
		$P_d$ & Dropping probability \\ \hline
		$w_1$       & Weight factor for $W_q$   \\ \hline
		$w_2$        & Weight factor for $S$     \\ \hline
		$w_3$& Weight factor for $P_b$ \\ \hline
		$w_4$ & Weight factor for $P_d$  \\ \hline
		$n_{0}$      & The capacities of legacy network equipment                      \\ \hline
		$U_{i}$       & The \texttt{Up} threshold to control the VNF instances\\ \hline
		$D_{i}$       & The \texttt{Down} threshold to control the VNF instances \\ \hline
		$m_{i}$      & The $i$-th reserve sub-block ($i = 1, 2, \cdots k$).                  \\ \hline
		$\lambda$ & Job arrival rate                                \\ \hline
		$\mu$       & Service rate for each server                    \\ \hline
		$\alpha$     & Setup rate for each virtual server              \\ \hline
		$\theta$     & Abandonment rate of each job              \\ \hline

	\end{tabular}
\end{table}

In this section, we propose the analytical model for DBCA. The goal of the analytical model is to cross-validate the accuracy of the simulation experiments and to analyze both the operation cost and the system performance for DBCA.  Given the analytical model, one can quickly obtain the operation cost and system performance for DBCA, without real deployment, saving on cost and time.

We model the system as a queueing model with $N$ servers and a capacity of $K$, i.e., the maximum of $K$ jobs can be accommodated in the system. Job arrivals follow  the Poisson distribution with rate $\lambda$. A VNF instance (server) accepts one job at a time, and its service time follows the exponential distribution with mean $1/\mu$. There is a limited FCFS queue for those jobs that have to wait for processing.

In this system, a server is turned off immediately if it has no job to do. Upon arrival of a job, an OFF server is turned on if any and the job is placed in the buffer. However, a server needs some setup time to be active so as to serve waiting jobs. We assume that the setup time follows the exponential distribution with mean $1/\alpha$. Let $j$ denotes the number of customers in the system and $i$ denotes the number of active servers. The number of reserves (server) in setup process is $\min(j-n_i, N-n_i)$. Here, $n_i=n_{i-1}+m_i$, where $m_i=1$ for all $i$ (block size is one). Therefore, in this model a server in reserve blocks is in either BUSY or OFF or SETUP.  We assume that waiting jobs are served according to an FCFS manner. We call this model an $M/M/N/K$ Setup queue.


Here, we present a recursive scheme to calculate the joint stationary distribution.
Let $C(t)$ and $L(t)$ denote the number of active servers and the number of customers in the system, respectively. It is easy to see that $\{ X(t) = (C(t),L(t)); t \geq 0\}$ forms a Markov chain on the state space:
\begin{align*}
\mathcal{S} =  &   \{(i,j); 1 \leq i \leq k, j  = n_i,n_i+1,\dots,K-1,K\}\\
& \cup \{(0,j); j = 0,1,\dots,K-1,K\}.
\end{align*}

\begin{figure}[t]
	\begin{center}
		\includegraphics[width=8.6cm]{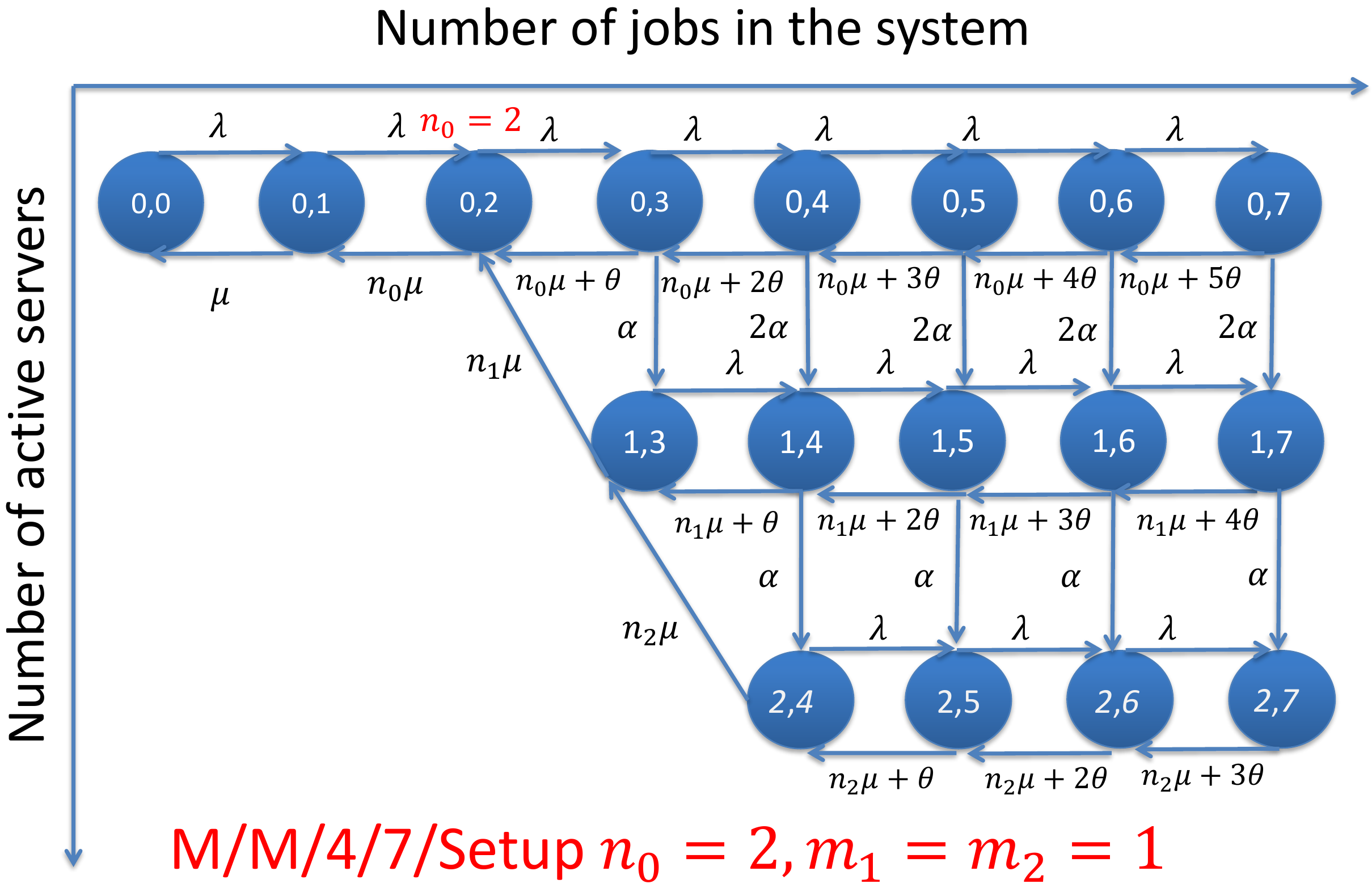}
		\caption{Transition among states ($N=4, n_0, = 2, m_1 = m_2 = 1$,  and $K=7$).}
		\label{fig:Markov_Chain_4_7}
	\end{center}
	\vspace{-11mm}
\end{figure}

Fig.~\ref{fig:Markov_Chain_4_7} shows the transition among states for the case  where $N=4, n_0 = 2, m_1 = m_2 = 1$,  and $K=7$.  Let $\pi_{i,j} = \lim_{t \to \infty} {\rm P} (C(t) = i, L(t) = j)$ ($(i,j) \in \mathcal{S}$) denote the joint stationary distribution of $\{X(t) \}$.
Here, we derive a recursion for calculating the joint stationary distribution $\pi_{i,j}$ ($(i,j) \in \mathcal{S}$). The balance equations for states with $i=0$ read as follows.
\begin{align*}
\lambda \pi_{0,j-1} \textit{} &=  j \mu \pi_{0,j},\\
&\mathrm{for}\;\; j = 0,1,\dots,n_0,  \\
\lambda \pi_{0,j-1}  + n_0 \mu \pi_{0,j+1} &=  (\lambda + \min (j-n_0,N-n_0) \alpha + n_0 \mu    ) \pi_{0,j} ,\\
&\mathrm{for}\;\; j = n_0,n_0 + 1, \dots, K-1, \\
\lambda \pi_{0,K-1}& =   (n_0 \mu + (N-n_0) \alpha) \pi_{0,K},
\end{align*}
leading to
\begin{align*}
\pi_{0,j} = b^{(0)}_j \pi_{0,j-1}, \qquad j=1,2,\dots, K.
\end{align*}
The sequence, $\{ b^{(0)}_j; j = 1,2,\dots,K \}$ is given as follows.
\[
b^{(0)}_j = \frac{\lambda}{j\mu}, \qquad j = 1,2,\dots,n_0,
\]
and
\begin{align*}
b^{(0)}_j = \frac{\lambda}{A1_j},\qquad j = K-1,K-2,\dots,n_0 + 1,
\end{align*}
where $A1_j=\lambda + n_0 \mu + \min (j-n_0,N-n_0) \alpha + (j-n_0) \theta - (n_0 \mu + (j+1-n_0) \theta) b^{(0)}_{j+1}$ and
\[
b^{(0)}_K = \frac{\lambda}{ n_0 \mu + (N-n_0) \alpha + (K-n_0) \theta }.
\]
Furthermore, it should be noted that $\pi_{1,1}$ is calculated using the local balance equation in and out the set $\{(0,j); j = 0,1,\dots,K\}$ as follows.
\[
n_1 \mu \pi_{1,1} = \sum_{j=n_1}^K \min (j,N-n_0) \alpha \pi_{0,j}.
\]

\begin{remark}
	We have expressed $\pi_{0,j}$ ($j = 1,2,\dots,K$) and $\pi_{1,1}$ in terms of $\pi_{0,0}$.
\end{remark}

%

%
%

Next, we consider the case $i=1$.

\begin{lem}\label{lemma:pi1j}
	We have
	\[
	\pi_{1,j}  = a^{(1)}_j + b^{(1)}_j \pi_{1,j-1}, \qquad j = 2,3, \dots, K-1,K,
	\]
	where
	\begin{align}
		\label{a1_j:eq}
\nonumber		a^{(1)}_j  = & \frac{1}{A2} \Big((n_1 \mu + (j+1-n_1) \theta) a^{(1)}_{j+1}\\
 &\qquad+ \min(j-n_0,N-n_0) \alpha \pi_{0,j} \Big)  , \\
		\label{b1_j:eq}
		b^{(1)}_j  =&  \frac{ \lambda   }{A2},
	\end{align}
	for
\begin{align*}
j &= K-1, K-2, \dots,2,\\
A2&= n_1 \mu + \lambda + \min(j-n_1,N-n_1) \alpha\\
& \quad+  (j-n_1) \theta - (n_1 \mu + (j+1-n_1)\theta) b^{(1)}_{j+1},\\
a^{(1)}_K &= \frac{(N-n_0) \alpha \pi_{0,K}}{n_1 \mu + (N-n_1) \alpha + (K-n1) \theta },\\
\qquad b^{(1)}_K &= \frac{\lambda}{n_1 \mu + (N-n_1) \alpha + (K-n_1)\theta}.
\end{align*}

\end{lem}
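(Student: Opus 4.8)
The plan is to mimic the argument already used for the level $i=0$: write the global balance equations for the states on the level $i=1$, namely the states $(1,j)$, and solve the resulting three-term recurrence by backward substitution in $j$, starting from the truncation boundary $j=K$ and descending. The one new feature is that each level-$1$ balance equation carries an inhomogeneous term coming from the neighbouring level $i=0$ (a setup completion at a state $(0,j)$ moves the chain to $(1,j)$), so $\pi_{1,j}$ comes out affine, not linear, in $\pi_{1,j-1}$ --- this is precisely why the extra sequence $a^{(1)}_j$ accompanies $b^{(1)}_j$.

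First I would write down the balance equation at a generic interior state $(1,j)$ with $n_1 < j < K$. Its inflows are: an arrival from $(1,j-1)$ at rate $\lambda$; a service completion or an abandonment from $(1,j+1)$ at rate $n_1\mu + (j+1-n_1)\theta$; and a setup completion from $(0,j)$ at rate $\min(j-n_0,N-n_0)\alpha$. Its total outflow rate is $\lambda + n_1\mu + (j-n_1)\theta + \min(j-n_1,N-n_1)\alpha$, the last term being the rate at which a reserve in setup becomes active and carries the chain to $(2,j)$. Equating the two sides and substituting the induction hypothesis $\pi_{1,j+1} = a^{(1)}_{j+1} + b^{(1)}_{j+1}\pi_{1,j}$ to eliminate $\pi_{1,j+1}$, I would move all $\pi_{1,j}$ terms to the left; the coefficient multiplying $\pi_{1,j}$ is exactly $A2$, and dividing through yields the stated expressions for $a^{(1)}_j$ and $b^{(1)}_j$. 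The base case $j=K$ must be treated separately because $(1,K)$ has no outgoing arrival: its outflow rate is $n_1\mu + (K-n_1)\theta + (N-n_1)\alpha$ and its only inflows are from $(1,K-1)$ at rate $\lambda$ and from $(0,K)$ at rate $(N-n_0)\alpha$, so rearranging gives the displayed $a^{(1)}_K$ and $b^{(1)}_K$, which seed the descent.

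Two small points close the argument. For the recursion to be well posed, $A2$ must be nonzero at every step: since $b^{(1)}_{j+1}$ is inductively seen to lie in $(0,1)$ (it is a transition rate divided by the strictly larger total exit rate of $(1,j+1)$), one has $A2 \geq \lambda + (j-n_1)\theta + \min(j-n_1,N-n_1)\alpha > 0$. And the inhomogeneous term is genuinely known: by the Remark preceding the lemma every $\pi_{0,j}$ is already expressed through $\pi_{0,0}$, while the probability of the lowest state of level $1$ is pinned by the cut (level-crossing) identity stated just above the lemma; feeding these in makes every $\pi_{1,j}$ an explicit multiple of $\pi_{0,0}$. I expect the only real obstacle to be bookkeeping --- correctly enumerating the in- and out-transitions of $(1,j)$, in particular not dropping the cross-level setup arrival from $(0,j)$, and handling the $j=K$ truncation --- after which the substitution is mechanical and reproduces (\ref{a1_j:eq}) and (\ref{b1_j:eq}).
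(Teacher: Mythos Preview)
Your proposal is correct and follows essentially the same route as the paper: write the global balance equations for the states $(1,j)$, anchor the backward recursion at $j=K$, and carry out the inductive step by substituting $\pi_{1,j+1}=a^{(1)}_{j+1}+b^{(1)}_{j+1}\pi_{1,j}$ into the interior balance equation. Your write-up is in fact a little more careful than the paper's own proof---you keep the abandonment terms $(j-n_1)\theta$ and $(j+1-n_1)\theta$ explicit (the paper's displayed balance equations drop them, an evident typo), and you fold in the positivity of $A2$, which the paper separates out as Theorem~\ref{theorem1:thm}.
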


\begin{proof}
	We prove using mathematical induction. Balance equations are given as follows.
	\begin{align}
		\label{pi1j:eq}
		&(\lambda + n_1 \mu + \min ( j-n_1, N-n_1 ) \alpha) \pi_{1,j}  \nonumber \\
		&=  \lambda \pi_{1,j-1} + n_1 \mu \pi_{1,j+1} + \min(j-n_0,N-n_0) \alpha \pi_{0,j}, \\
		&\mathrm{for}\;\; 2 \leq j \leq K-1, \nonumber \\
		\label{pi1K:eq}
\nonumber	&(\mu + \min (K-n_1,N-n_1) \alpha\\
&\quad + (N-n_1)\theta ) \pi_{1,K}  =  \lambda \pi_{1,K-1} + (N-n_0) \alpha \pi_{0,K}.
	\end{align}
	It follows from (\ref{pi1K:eq}) that
	\[
	\pi_{1,K} = a^{(1)}_K + b^{(1)}_K \pi_{1,K-1} ,
	\]
	leading to the fact that Lemma~\ref{lemma:pi1j} is true for $j=K$. Assuming that Lemma~\ref{lemma:pi1j} is true for $j+1$, i.e., $\pi_{1,j+1}  = a^{(1)}_{j+1} + b^{(1)}_{j+1} \pi_{1,j}$. It then follows from (\ref{pi1j:eq}) that Lemma~\ref{lemma:pi1j} is also true for $j$, i.e., $\pi_{1,j}  = a^{(1)}_j + b^{(1)}_j \pi_{1,j-1}$.
\end{proof}
\begin{thm}\label{theorem1:thm}
	We have the following bound.
	\[
	a^{(1)}_j \geq 0, \qquad 0 \leq b^{(1)}_j  \leq \frac{\lambda}{n_1 \mu  + \min(j-n_1,N-n_1) \alpha},
	\]
	for $j = 2,3,\dots,K-1,K$.
\end{thm}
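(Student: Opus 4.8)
The plan is to prove both inequalities simultaneously by backward induction on $j$, mirroring the recursion already used to establish Lemma~\ref{lemma:pi1j}. The base case is $j=K$: from the closed forms $a^{(1)}_K = (N-n_0)\alpha\pi_{0,K}/(n_1\mu + (N-n_1)\alpha + (K-n_1)\theta)$ and $b^{(1)}_K = \lambda/(n_1\mu + (N-n_1)\alpha + (K-n_1)\theta)$, non-negativity of $a^{(1)}_K$ follows because $\pi_{0,K}\geq 0$ (it is a stationary probability) and all rates $\lambda,\mu,\alpha,\theta$ are non-negative; for $b^{(1)}_K$, dropping the non-negative terms $(N-n_1-\min(K-n_1,N-n_1))\alpha$ — note $\min(K-n_1,N-n_1)=N-n_1$ here since $K\geq N$ — and $(K-n_1)\theta$ from the denominator only increases the fraction, giving $b^{(1)}_K \leq \lambda/(n_1\mu + \min(K-n_1,N-n_1)\alpha)$.

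For the inductive step, assume $a^{(1)}_{j+1}\geq 0$ and $b^{(1)}_{j+1} \leq \lambda/(n_1\mu + \min(j+1-n_1,N-n_1)\alpha)$. The key quantity to control is the denominator $A2 = n_1\mu + \lambda + \min(j-n_1,N-n_1)\alpha + (j-n_1)\theta - (n_1\mu + (j+1-n_1)\theta)b^{(1)}_{j+1}$. I would show $A2 > 0$ and in fact $A2 \geq n_1\mu + \min(j-n_1,N-n_1)\alpha$ by bounding the subtracted term: using the inductive bound on $b^{(1)}_{j+1}$,
\[
(n_1\mu + (j+1-n_1)\theta)\,b^{(1)}_{j+1} \leq \frac{(n_1\mu + (j+1-n_1)\theta)\lambda}{n_1\mu + \min(j+1-n_1,N-n_1)\alpha}.
\]
The cleanest route is to compare this against $\lambda$ plus a slack: one wants $(n_1\mu + (j+1-n_1)\theta)b^{(1)}_{j+1} \leq \lambda + (j-n_1)\theta + \bigl(\min(j-n_1,N-n_1)-0\bigr)\alpha$-type terms so that $A2\geq n_1\mu + \min(j-n_1,N-n_1)\alpha$. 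Given $A2>0$, non-negativity of $a^{(1)}_j$ is immediate from (\ref{a1_j:eq}) since $a^{(1)}_{j+1}\geq 0$, $\pi_{0,j}\geq 0$, and the coefficients are non-negative; and $b^{(1)}_j = \lambda/A2 \leq \lambda/(n_1\mu + \min(j-n_1,N-n_1)\alpha)$ follows directly from the lower bound on $A2$.

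The main obstacle will be establishing the lower bound $A2 \geq n_1\mu + \min(j-n_1,N-n_1)\alpha$ (equivalently, that the subtracted term does not exceed $\lambda + \min(j-n_1,N-n_1)\alpha - \min(j-n_1,N-n_1)\alpha + (j-n_1)\theta$, i.e.\ $\lambda + (j-n_1)\theta$ after the $\alpha$-terms cancel against the $\min(\cdot)$ in $A2$). This requires carefully handling how $\min(j-n_1,N-n_1)$ and $\min(j+1-n_1,N-n_1)$ relate — they are equal once $j-n_1\geq N-n_1$ and differ by one otherwise — and checking that in the regime where the $\min$ saturates at $N-n_1$ the extra $(j+1-n_1)\theta$ term in the numerator is absorbed by the $(j-n_1)\theta$ term already present in $A2$ together with the surplus $\alpha$-capacity. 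A convenient device is to prove the slightly stronger invariant that $b^{(1)}_j$ is bounded by $\lambda$ divided by the \emph{total outflow rate minus inflow-from-above rate}, which is exactly the probabilistic content: $b^{(1)}_j$ is a ratio of effective rates in a birth-death-like reduction of the chain, hence automatically in $[0,1)$ and bounded as claimed. I would first try the direct algebraic bound and fall back on this probabilistic interpretation (cut between $\{(1,j'),(0,j'): j'\leq j-1\}$ and its complement, restricted appropriately) if the inequality manipulation becomes unwieldy.
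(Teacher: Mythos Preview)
Your approach---backward induction on $j$, bounding the denominator $A2$ from below via the inductive hypothesis on $b^{(1)}_{j+1}$, and then reading off $a^{(1)}_j\geq 0$ and $b^{(1)}_j\leq \lambda/A2$ from (\ref{a1_j:eq})--(\ref{b1_j:eq})---is exactly the paper's. Your suspicion that the stated bound (without a $\theta$ term) may not close the induction by itself, and that a slightly stronger invariant is needed, is well founded: the paper's own proof silently assumes the stronger hypothesis $b^{(1)}_{j+1}\leq \lambda/\bigl(n_1\mu+\min(j{+}1{-}n_1,N{-}n_1)\alpha+(j{+}1{-}n_1)\theta\bigr)$ (this is the form stated explicitly in Theorem~\ref{theorem:i}), from which $(n_1\mu+(j{+}1{-}n_1)\theta)\,b^{(1)}_{j+1}\leq\lambda$ and hence $A2\geq n_1\mu+\min(j{-}n_1,N{-}n_1)\alpha+(j{-}n_1)\theta$, closing the induction without any case split on the $\min$ or a probabilistic detour.
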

\begin{proof}
	We use mathematical induction. It is easy to see that the theorem is true for $j=K$. Assuming that the theorem is true for $j+1$, i.e.,
	\begin{align*}
		0&\leq a^{(1)}_{j+1} , \\
		\qquad 0 &\leq b^{(1)}_{j+1}  \leq \frac{\lambda}{n_1 \mu  + \min(j-n_1,N-n_1) \alpha + (j-n_1) \theta },
	\end{align*}where $j = 1,2,\dots,K-1.$

	Thus, we have $\mu b^{(1)}_{j+1} < \lambda$. From this inequality, (\ref{a1_j:eq}) and (\ref{b1_j:eq}), we obtain
	\[
	b^{(1)}_j  \leq \frac{\lambda}{n_1 \mu  + \min(j-n_1,N-n_1) \alpha + (n-n_1) \theta},
	\]
	and $a^{(1)}_j \geq 0$.
\end{proof}
It should be noted that $\pi_{2,2}$ can be calculated using the local balance between the flows in and out the set of states $\{ (i,j); i = 0,1, j = i, i+1,\dots,K\}$ as follows.
\[
n_2\mu \pi_{2,n_2} = \sum_{j=n_2}^K  \min(j-n_1,N-n_1) \alpha \pi_{1,j}.
\]
\begin{remark}
	We have expressed $\pi_{1,j}$ ($j = 1,2\dots,K$) and $\pi_{2,2}$ in terms of $\pi_{0,0}$.
\end{remark}

We consider the general case where $2 \leq i \leq k-1$.
Similar to the case $i=1$, we can prove the following result by mathematical induction.
\begin{lem}\label{lemma:pi_ij}
	We have
	\[
	\pi_{i,j} = a^{(i)}_j + b^{(i)}_j \pi_{i,j-1}, \qquad j = i + 1, i +2, \dots, K-1,K,
	\]
	where
	\begin{align}\label{backward:i}
		\nonumber a^{(i)}_j  & =  \frac{1}{A3}\Big( (n_i\mu + (j+1-n_i) \theta ) a^{(i)}_{j+1}\\
&\qquad \qquad +  \min(N-n_{i-1},j-n_{i-1}) \alpha \pi_{i-1,j}  \Big), \\
		\label{backward:ib}
		b^{(i)}_j  & =  \frac{\lambda}{A3},
	\end{align}
	and
\begin{align*}
 A3&= \lambda + \min (N-n_i,j-n_i) \alpha + n_i \mu\\
 & \quad+ (j-n_i) \theta -( n_i \mu + (j+1-n_i)\theta ) b^{(i)}_{j+1} \\
 a^{(i)}_K &= \frac{(N-n_{i-1}) \alpha \pi_{i-1,K}}{(N-n_i) \alpha + n_i \mu + (K-n_i) \theta },\\
 b^{(i)}_K &= \frac{\lambda }{(N-n_i) \alpha + n_i \mu  + (K-n_i) \theta }.
\end{align*}

\end{lem}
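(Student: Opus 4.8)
The plan is to reproduce the proof of Lemma~\ref{lemma:pi1j} at level~$i$ instead of level~$1$, treating the quantities $\pi_{i-1,j}$ as already known (they have been written in terms of $\pi_{0,0}$ at the preceding stage of the outer induction on~$i$). First I would record the balance equations for the states $(i,j)\in\mathcal{S}$. Balancing the rate out of $(i,j)$ — arrival $\lambda$, the $n_i$ busy servers at rate $\mu$, the $\min(j-n_i,N-n_i)$ servers still in setup at rate $\alpha$, and the $j-n_i$ queued jobs abandoning at rate $\theta$ — against the rate in — an arrival from $(i,j-1)$, a departure or an abandonment from $(i,j+1)$, and a setup completion from $(i-1,j)$, of which $\min(j-n_{i-1},N-n_{i-1})$ are in progress — gives, for $n_i<j<K$,
\begin{align*}
&\bigl(\lambda+n_i\mu+\min(j-n_i,N-n_i)\alpha+(j-n_i)\theta\bigr)\pi_{i,j}\\
&\quad=\lambda\pi_{i,j-1}+\bigl(n_i\mu+(j+1-n_i)\theta\bigr)\pi_{i,j+1}\\
&\qquad+\min(j-n_{i-1},N-n_{i-1})\alpha\,\pi_{i-1,j},
\end{align*}
and, at $j=K$, where no arrival is admitted and no state $(i,K+1)$ exists (and $K\ge N$, so $\min(K-n_i,N-n_i)=N-n_i$),
\begin{align*}
\bigl(n_i\mu+(N-n_i)\alpha+(K-n_i)\theta\bigr)\pi_{i,K}=\lambda\pi_{i,K-1}+(N-n_{i-1})\alpha\,\pi_{i-1,K}.
\end{align*}

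The argument is then a downward induction on $j$, identical in form to the one for Lemma~\ref{lemma:pi1j}. The base case $j=K$ is obtained by solving the boundary equation for $\pi_{i,K}$, which gives $\pi_{i,K}=a^{(i)}_K+b^{(i)}_K\pi_{i,K-1}$ with the stated $a^{(i)}_K,b^{(i)}_K$. For the step, assume $\pi_{i,j+1}=a^{(i)}_{j+1}+b^{(i)}_{j+1}\pi_{i,j}$ and substitute into the interior equation; the term $\bigl(n_i\mu+(j+1-n_i)\theta\bigr)\pi_{i,j+1}$ splits into the constant $\bigl(n_i\mu+(j+1-n_i)\theta\bigr)a^{(i)}_{j+1}$ and a multiple of $\pi_{i,j}$, and moving the latter to the left-hand side collects precisely the coefficient $A3$ of the statement. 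Dividing through yields $\pi_{i,j}=a^{(i)}_j+b^{(i)}_j\pi_{i,j-1}$ with $a^{(i)}_j,b^{(i)}_j$ as in (\ref{backward:i})--(\ref{backward:ib}), closing the induction. The recursion terminates at $j=n_i$, whose value $\pi_{i,n_i}$ is not produced by the recursion itself but is fixed separately by the local balance across the cut $\{(i',j)\in\mathcal{S}:i'\le i-1\}$, namely $n_i\mu\,\pi_{i,n_i}=\sum_{j=n_i}^{K}\min(j-n_{i-1},N-n_{i-1})\alpha\,\pi_{i-1,j}$, exactly as $\pi_{1,n_1}$ and $\pi_{2,n_2}$ were obtained above.

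The only non-mechanical point — and the step I expect to be the main (if modest) obstacle — is the legitimacy of dividing by $A3$, i.e.\ $A3>0$; this is the level-$i$ analogue of Theorem~\ref{theorem1:thm} and should be carried along the same induction as a joint invariant, namely $a^{(i)}_j\ge0$ and $0\le b^{(i)}_j\le\lambda/\bigl(n_i\mu+\min(j-n_i,N-n_i)\alpha+(j-n_i)\theta\bigr)$. From the invariant at $j+1$ one obtains $\bigl(n_i\mu+(j+1-n_i)\theta\bigr)b^{(i)}_{j+1}<\lambda$ — strictly, since $\min(j+1-n_i,N-n_i)\alpha>0$ for $2\le i\le k-1$ and $j\ge n_i$ — hence $A3>n_i\mu+\min(j-n_i,N-n_i)\alpha+(j-n_i)\theta>0$; this bound re-establishes the invariant for $b^{(i)}_j=\lambda/A3$, while $a^{(i)}_j\ge0$ follows from $A3>0$ together with $\pi_{i-1,j}\ge0$ and the invariant for $a^{(i)}_{j+1}$. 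The base case holds because $b^{(i)}_K$ meets the bound with equality. Everything else is the same linear elimination already performed in Lemma~\ref{lemma:pi1j}.
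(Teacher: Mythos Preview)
Your proposal is correct and follows essentially the same route as the paper: write the boundary balance equation at $j=K$ to get the base case, then do a downward induction on $j$, substituting $\pi_{i,j+1}=a^{(i)}_{j+1}+b^{(i)}_{j+1}\pi_{i,j}$ into the interior balance equation and solving for $\pi_{i,j}$. The only difference is that you fold the positivity of $A3$ (and the bounds on $a^{(i)}_j,b^{(i)}_j$) into the same argument, whereas the paper separates that off as Theorem~\ref{theorem:i}; likewise, the local-balance computation of $\pi_{i,n_i}$ is stated in the paper immediately after the lemma rather than inside its proof.
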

\begin{proof}
	The balance equation for state $(i,K)$ is given as follows.
	\[
	((N-n_i) \alpha + n_i \mu (K-n_i) \theta ) \pi_{i,K} = \lambda \pi_{i,K-1} + (c-n_{i-1}) \alpha \pi_{i-1,K},
	\]
	leading to the fact that Lemma~\ref{lemma:pi_ij} is true for $j=K$.
	Assuming that
	\[
	\pi_{i,j+1} = a^{(i)}_{j+1} + b^{(i)}_{j+1} \pi_{i,j}, \qquad j = i + 1, i +2, \dots, K-1.
	\]
	It then follows from
	\begin{align*}
		& (\lambda + \min (N-n_i,j-n_i) \alpha + n_i \mu  + (j-n_i) \theta ) \pi_{i,j} \\
		& =  \lambda \pi_{i,j-1} + (n_i \mu  + (j+1-n_i) \theta ) \pi_{i,j+1} \\
&\qquad+ \min(N-n_{i-1},j-n_{i-1}) \alpha \pi_{i-1,j}, \\
		& j = K-1,K-2,\dots,i+1,
	\end{align*}
	that
	\[
	\pi_{i,j} = a^{(i)}_{j} + b^{(i)}_{j} \pi_{i,j-1}.
	\]
\end{proof}

\begin{thm}\label{theorem:i}
	We have the following bound.
\begin{align*}
	&a^{(i)}_j >0,\\
	&0 < b^{(i)}_j  < \frac{\lambda}{n_i \mu  + \min(j-n_i,N-n_i) \alpha + (j-n_i) \theta},
\end{align*}
for $j = n_i+1,n_i+2,\dots,K, i = 1,2,\dots,k-1$.
\end{thm}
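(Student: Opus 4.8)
The plan is to copy the structure of the proof of Theorem~\ref{theorem1:thm}, now carried out at an arbitrary level $i\in\{1,2,\dots,k-1\}$. Fix such an $i$. Lemma~\ref{lemma:pi_ij} supplies the backward recursions (\ref{backward:i})--(\ref{backward:ib}), which express $a^{(i)}_j$ and $b^{(i)}_j$ through $a^{(i)}_{j+1}$, $b^{(i)}_{j+1}$, the shared denominator $A3$, and the stationary probability $\pi_{i-1,j}$. Since $\pi_{i-1,j}>0$ for every admissible $j$ (these are, up to the common normalizing constant $\pi_{0,0}$, stationary probabilities of the irreducible finite CTMC $\{X(t)\}$ on $\mathcal{S}$, and every state $(i-1,j)$ with $n_{i-1}\le j\le K$ — respectively $(0,j)$ when $i=1$ — lies in $\mathcal{S}$), I would treat $\pi_{i-1,j}$ as a positive external input and run a single induction that decreases $j$ from $K$ down to $n_i+1$, establishing both asserted inequalities at once.

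For the base case $j=K$ I would just read off the closed forms in Lemma~\ref{lemma:pi_ij}: $b^{(i)}_K=\lambda/\big((N-n_i)\alpha+n_i\mu+(K-n_i)\theta\big)$ is positive and, because $\min(K-n_i,N-n_i)=N-n_i$, it equals the claimed upper bound at $j=K$; and $a^{(i)}_K=(N-n_{i-1})\alpha\,\pi_{i-1,K}/\big((N-n_i)\alpha+n_i\mu+(K-n_i)\theta\big)>0$ since $\pi_{i-1,K}>0$.

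For the inductive step, assume the bounds hold at $j+1$. The key point, exactly as in Theorem~\ref{theorem1:thm}, is that the term subtracted inside $A3$ stays strictly below $\lambda$: the multiplier $n_i\mu+(j+1-n_i)\theta$ of $b^{(i)}_{j+1}$ does not exceed the denominator $n_i\mu+\min(j+1-n_i,N-n_i)\alpha+(j+1-n_i)\theta$ that appears in the induction hypothesis for $b^{(i)}_{j+1}$, so $(n_i\mu+(j+1-n_i)\theta)\,b^{(i)}_{j+1}<\lambda$. Substituting this into the definition of $A3$ yields $A3> n_i\mu+\min(N-n_i,j-n_i)\alpha+(j-n_i)\theta>0$. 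Positivity of $A3$ makes $b^{(i)}_j=\lambda/A3>0$, and the lower bound just obtained for $A3$ is precisely the claimed upper bound for $b^{(i)}_j$; moreover, with $A3>0$, $a^{(i)}_{j+1}>0$ (hypothesis), $\pi_{i-1,j}>0$, and all coefficients in (\ref{backward:i}) nonnegative with $n_i\mu>0$, we get $a^{(i)}_j>0$. This closes the induction.

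I do not expect a genuine obstacle, since the whole argument is a transcription of the $i=1$ case; only two points need a moment's care. First, the abandonment contributions must be kept straight — the term $(j-n_i)\theta$ enters without a $\min$, because every one of the $j-n_i$ waiting jobs may abandon, and it is exactly this that legitimizes the denominator comparison used in the inductive step. Second, the strict positivity of the previous-level probabilities $\pi_{i-1,j}$ is really needed (note $a^{(i)}_K$ would vanish if $\pi_{i-1,K}=0$), so I would dispatch it once and for all by the irreducibility remark above rather than fold it into the induction; an alternative outer induction on $i$ carrying $a^{(i-1)}_j\ge 0$ still cannot avoid invoking positivity of $\pi_{i-1,K}$ somewhere, which is why the irreducibility argument is the cleanest route.
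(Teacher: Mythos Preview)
Your proposal is correct and follows essentially the same approach as the paper's proof: a backward induction on $j$ from $K$ down to $n_i+1$, using the recursions (\ref{backward:i})--(\ref{backward:ib}) together with the key observation that $(n_i\mu+(j+1-n_i)\theta)\,b^{(i)}_{j+1}<\lambda$ to lower-bound $A3$. Your write-up is in fact more careful than the paper's terse sketch---you explicitly justify the positivity of $\pi_{i-1,j}$ via irreducibility and track the $\theta$-terms correctly---but the underlying argument is identical.
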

\begin{proof}
	We also prove using mathematical induction. It is clear that Theorem~\ref{theorem:i} is true for $j=K$. Assuming that Theorem~\ref{theorem:i} is true for $j+1$, i.e.,
\begin{align*}
	&a^{(i)}_{j+1} >0,\\
 & 0 < b^{(i)}_{j+1}  < \frac{\lambda}{n_i\mu  + \min(j+1-n_i,N-n_i) \alpha},
\end{align*}

	for $j = i+1,i+2,\dots,K-1, i = 1,2,\dots,c-1$.
	It follows from the second inequality that $i\mu b^{(i)}_{j+1} < \lambda$. This together with formulae (\ref{backward:i}) and (\ref{backward:ib}) yield the desired result.
\end{proof}

It should be noted that $\pi_{i+1,i+1}$ is calculated using the following local balance equation in and out the set of states:
\[ \{(k,j); k=0,1,\dots,i; j = k,k+1,\dots,K \}\] as follows.
\[
n_{i+1} \mu \pi_{i+1,i+1} = \sum_{j=n_i+1}^K \min(j-n_i,N-n_i) \alpha \pi_{i,j}.
\]
\begin{remark}
	We have expressed $\pi_{i,j}$ ($i=0,1,\dots,c-1, j = i,i+1,\dots,K$) and $\pi_{i+1,i+1}$ in terms of $\pi_{0,0}$.
\end{remark}

Finally, we consider the case $i = k$. Balance equation for state $(k, K)$ yields,
\begin{lem}
	We have
	\[
	\pi_{k,j} = a^{(k)}_j + b^{(k)}_j \pi_{k,j-1}, \qquad j = n_k + 1, n_k +2, \dots, K,
	\]
	where
	\begin{align}\label{backward:c}
		&a^{(k)}_j  =  \frac{ (n_k \mu + (j+1-n_k)\theta) a^{(k)}_{j+1} + (N-n_{k-1})\alpha \pi_{k-1,j}}{\lambda + n_k\mu + (j-n_k) \theta - (n_k \mu + (j + 1-n_k) \theta) b^{(k)}_{j+1}},\\
		& \qquad j = K-1, K-2, \dots, n_k+1,\nonumber\\
		\label{backward:cb}
		&b^{(k)}_j  =  \frac{\lambda}{\lambda + n_k\mu + (j-n_k) \theta - (n_k \mu + (j+1-n_k) \theta) b^{(k)}_{j+1}}, \\
		&\qquad j = K-1, K-2, \dots, n_k+1,\nonumber
	\end{align}
	and
	\[
	a^{(k)}_K = \frac{\alpha \pi_{k-1,K}}{n_k \mu + (j-n_k) \theta }, \qquad b^{(k)}_K = \frac{\lambda}{n_k \mu + (j-n_k) \theta }.
	\]
\end{lem}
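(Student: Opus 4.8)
The plan is to mirror the proof of Lemma~\ref{lemma:pi_ij} almost verbatim, using a backward (downward) induction on $j$ starting from $j=K$. The one structural simplification in the case $i=k$ is that $n_k=N$, so every server is active and no server is in setup; consequently $\min(N-n_k,\,j-n_k)\alpha=0$ in all relevant balance equations, and the only transitions \emph{into} a state $(k,j)$ are an arrival from $(k,j-1)$, a departure (service completion or abandonment) from $(k,j+1)$, and the completion of the last remaining setup from $(k-1,j)$. This is exactly why the $\alpha$-terms drop out of the self-loop rate in (\ref{backward:c})--(\ref{backward:cb}) while an inhomogeneous term proportional to $\pi_{k-1,j}$ survives.

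First I would write the balance equation for the top corner state $(k,K)$. Since the system is full there are no arrivals out, the outflow rate is $n_k\mu+(K-n_k)\theta$, and the inflow is $\lambda\pi_{k,K-1}+(N-n_{k-1})\alpha\,\pi_{k-1,K}$; solving for $\pi_{k,K}$ yields $\pi_{k,K}=a^{(k)}_K+b^{(k)}_K\pi_{k,K-1}$ with the stated $a^{(k)}_K,b^{(k)}_K$ (recalling $N-n_{k-1}=m_k=1$ under the block-size-one assumption), which establishes the base case $j=K$.

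For the inductive step, assume the claim holds at $j+1$, i.e. $\pi_{k,j+1}=a^{(k)}_{j+1}+b^{(k)}_{j+1}\pi_{k,j}$ for some $n_k+1\le j\le K-1$. I would then use the balance equation at $(k,j)$,
\[(\lambda+n_k\mu+(j-n_k)\theta)\pi_{k,j}=\lambda\pi_{k,j-1}+(n_k\mu+(j+1-n_k)\theta)\pi_{k,j+1}+(N-n_{k-1})\alpha\,\pi_{k-1,j},\]
substitute the inductive hypothesis for $\pi_{k,j+1}$, move the resulting $b^{(k)}_{j+1}$-multiple of $\pi_{k,j}$ to the left-hand side, and divide by the common factor. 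Reading off the coefficient of $\pi_{k,j-1}$ gives $b^{(k)}_j$ as in (\ref{backward:cb}), and the remaining constant gives $a^{(k)}_j$ as in (\ref{backward:c}); this closes the induction.

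The only point requiring care is that the common denominator $\lambda+n_k\mu+(j-n_k)\theta-(n_k\mu+(j+1-n_k)\theta)b^{(k)}_{j+1}$ stays strictly positive, so the recursion is well defined. This follows from an auxiliary downward induction showing $a^{(k)}_{j+1}\ge 0$ and $0\le b^{(k)}_{j+1}<\lambda/(n_k\mu+(j+1-n_k)\theta)$, exactly parallel to Theorem~\ref{theorem:i} (indeed it is the $i=k$ specialization with the $\alpha$-terms set to zero, which gives $n_k\mu\,b^{(k)}_{j+1}<\lambda$ and hence positivity). Apart from this bookkeeping the argument is routine; the main obstacle is simply keeping track of the two boundary features that distinguish the top block — the absence of any setup term in the full-system equation at $j=K$, and the inhomogeneous $\pi_{k-1,j}$ coupling that must be carried correctly through the substitution.
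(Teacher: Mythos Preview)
Your proposal is correct and follows essentially the same downward-induction argument as the paper: establish the base case from the balance equation at $(k,K)$, then substitute the inductive hypothesis into the balance equation at $(k,j)$ and solve for $\pi_{k,j}$. Your additional remark about positivity of the denominator is not part of the paper's proof of this lemma but is exactly the content of the subsequent Theorem~\ref{ab_cj:thm}, so you have simply folded that auxiliary fact into the same place.
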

\begin{proof}
	The global balance equation at state $(k,K)$ is given by
	\[
	(n_k \mu + (j-n_k) \theta) \pi_{k,K} = (N-n_{k-1}) \alpha \pi_{k-1,K} + \lambda \pi_{k,K-1},
	\]
	leading to
	\[
	\pi_{k,K} = a^{(k)}_K + b^{(k)}_K \pi_{k,K-1}.
	\]
	Assuming that $\pi_{k,j+1} = a^{(k)}_{j+1} + b^{(k)}_{j+1} \pi_{k,j}$, it follows from the global balance equation at state $(k,j)$,
	\begin{align*}
		(\lambda + &n_k \mu + (j-n_k)\theta ) \pi_{k,j} = \lambda \pi_{k,j-1} \\
&+ (n_k \mu + (j+1-n_k)\theta)  \pi_{n_k,j+1} + (N-n_{k-1}) \alpha \pi_{k-1,j},\\
		&\qquad  j = n_k+1,n_k+2,\dots,K-1,
	\end{align*}that $\pi_{k,j} = a^{(k)}_j + b^{(k)}_j \pi_{k,j-1}$ for $j=n_k+1,n_k+2,\dots,K$.
\end{proof}
\begin{thm}\label{ab_cj:thm}
	We have the following bound.
\begin{align*}
	a^{(k)}_j >0,\qquad &0 < b^{(k)}_j  < \frac{\lambda}{n_k \mu + (j-n_k) \theta}, \\
 &j = n_k+1,n_k+2,\dots,K-1.
\end{align*}

\end{thm}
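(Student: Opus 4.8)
The plan is to follow exactly the backward-induction template of Theorems~\ref{theorem1:thm} and~\ref{theorem:i}, now specialized to the top layer $i=k$, where $n_k=N$ and hence no server is ever in setup: the only transitions out of a state $(k,j)$ with $j>n_k$ are arrivals, services and abandonments, which is why the $\alpha$-term drops out of the denominator in (\ref{backward:cb}). First I would fix $k$ and induct downward on $j$, from $j=K$ down to $j=n_k+1$. The base case is immediate from the explicit boundary formulas: $a^{(k)}_K\ge 0$ and $0<b^{(k)}_K=\frac{\lambda}{n_k\mu+(K-n_k)\theta}$, so the starting values already obey the (non-strict) version of the asserted bound.

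For the inductive step, assume $a^{(k)}_{j+1}>0$ and $0<b^{(k)}_{j+1}\le \frac{\lambda}{n_k\mu+(j+1-n_k)\theta}$. The one inequality that does the work is obtained by multiplying this hypothesis through by $n_k\mu+(j+1-n_k)\theta$, giving $(n_k\mu+(j+1-n_k)\theta)\,b^{(k)}_{j+1}\le\lambda$. Substituting this into the common denominator
\[
D_j:=\lambda+n_k\mu+(j-n_k)\theta-(n_k\mu+(j+1-n_k)\theta)\,b^{(k)}_{j+1}
\]
appearing in (\ref{backward:c}) and (\ref{backward:cb}) yields $D_j\ge n_k\mu+(j-n_k)\theta>0$. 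Hence $b^{(k)}_j=\lambda/D_j$ is well defined, strictly positive, and bounded above by $\frac{\lambda}{n_k\mu+(j-n_k)\theta}$, which is the claim for index $j$. For $a^{(k)}_j$, the numerator in (\ref{backward:c}) equals $(n_k\mu+(j+1-n_k)\theta)\,a^{(k)}_{j+1}+(N-n_{k-1})\alpha\,\pi_{k-1,j}$, which is positive because $a^{(k)}_{j+1}>0$ by hypothesis, $\pi_{k-1,j}\ge 0$ as a stationary probability, and all rate coefficients are positive; dividing by $D_j>0$ gives $a^{(k)}_j>0$, closing the induction.

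The only delicate points are bookkeeping rather than conceptual. One is the boundary: at the first interior step $j=K-1$ the hypothesis on $b^{(k)}_K$ holds with equality, so $D_{K-1}=n_k\mu+(K-1-n_k)\theta$ exactly and $b^{(k)}_{K-1}$ attains its stated bound; this is harmless and is handled exactly as the analogous tight boundary is in the proofs of Theorems~\ref{theorem1:thm} and~\ref{theorem:i}, reading the upper bounds with ``$\le$''. The other is that strict positivity $a^{(k)}_{K-1}>0$ relies on the inflow term $(N-n_{k-1})\alpha\,\pi_{k-1,K-1}$ being strictly positive (since $a^{(k)}_K$ may vanish), which follows from irreducibility of $\{X(t)\}$; alternatively one proves the weaker $a^{(k)}_j\ge 0$ verbatim and upgrades to strictness afterwards. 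I expect the main ``obstacle'' — such as it is — to be simply verifying $D_j>0$, i.e., that the recursion never divides by zero, and the inductive upper bound on $b^{(k)}_{j+1}$ is precisely engineered to guarantee this.
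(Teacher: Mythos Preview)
Your proposal is correct and follows essentially the same approach as the paper: backward induction on $j$ from $j=K$, using the inductive upper bound on $b^{(k)}_{j+1}$ to force the common denominator $D_j$ positive, and then reading off the bounds for $a^{(k)}_j$ and $b^{(k)}_j$ from (\ref{backward:c}) and (\ref{backward:cb}). If anything, you are more careful than the paper about the strict-versus-nonstrict inequality at the boundary $j=K$ and about the positivity of $a^{(k)}_K$ via irreducibility; the paper simply asserts the base case is ``clear'' and then invokes $n_k\mu\,b^{(k)}_{j+1}<\lambda$ together with the recursions.
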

\begin{proof}
	We also prove using mathematical induction. It is clear that Theorem~\ref{ab_cj:thm} is true for $j=K$. Assuming that Theorem~\ref{ab_cj:thm} is true for $j+1$, i.e.,
	\begin{align*}
		a^{(k)}_{j+1} >0, \qquad &0 < b^{(k)}_{j+1}  < \frac{\lambda}{n_k \mu + (j-n_k) \theta}, \\
		&j = n_k+1,n_k+2,\dots,K-1.
	\end{align*}
	
	It follows from the second inequality that $n_k \mu b^{(k)}_{j+1} < \lambda$. This together with formulae (\ref{backward:c}) and (\ref{backward:cb}) yield the desired result.
\end{proof}

We have expressed all the probability $\pi_{i,j}$ ($(i,j) \in \mathcal{S}$) in terms of $\pi_{0,0}$ which is uniquely determined by the normalizing condition.
\[
\sum_{(i,j) \in \mathcal{S}} \pi_{i,j} = 1.
\]

Let ${\rm E}[L]$ denote the mean number of jobs in the system. We have
\[
{\rm E}[L] =\sum_{(i,j) \in \mathcal{S}} \pi_{i,j} j = \sum_{i=0}^{n_0-1} \pi_{0,j} j + \sum_{i=0}^k \sum_{j=n_i}^K \pi_{i,j} j.
\]

Let $P_b$ denote the blocking probability. We have
\begin{align}\label{eq:Pb}
P_b = \sum_{i=0}^k \pi_{i,K}.
\end{align}

It follows from Little's law that
\begin{align}
W = \frac{{\rm E}[L]}{\lambda (1-P_b)}=\frac{\sum_{i=0}^{n_0-1} \pi_{0,j} j + \sum_{i=0}^k \sum_{j=n_i}^K \pi_{i,j} j}{\lambda (1-\sum_{i=0}^k \pi_{i,K})}.
\end{align}
We obtain
\begin{align}
	W_q = W-\frac{1}{\mu}. \label{eq:W_q}
\end{align}

The mean number of VNF instances is given by
\begin{align}
	S = \sum_{(i,j) \in \mathcal{S}} \pi_{i,j} (n_i-n_0) + \sum_{i=0}^k \sum_{j = n_i}^K \pi_{i,j} \min(j-n_i,N-n_i), \label{eq:S}
\end{align}
where the first term is the number of VNF instances that are already active while the second term is the mean number of VNF instances in setup mode.

Let ${\rm E}[Q]$ denote the mean number of waiting jobs in the system. We have
\[
	{\rm E}[Q] = \sum_{i=0}^k \sum_{j = n_i}^K \pi_{i,j} (j-i).
\]
Let $P_d$ denote the reneging probability that a waiting job abandons from the system. We have
\begin{align}\label{eq:Pd}
	P_d & = \frac{{\rm E}[Q] \theta}{\lambda (1-P_b)}=\frac{\sum_{i=0}^k \sum_{j = n_i}^K \pi_{i,j} (j-i)\theta}{\lambda (1-P_b)},
\end{align}
where the numerator and the denominator are the abandonment rate and the arrival rate of accepted jobs, respectively.

Again, based on the above derived performance metrics $W_q$, $S$, $P_b$, and $P_d$, mobile operators can easily design network optimization strategies according to~(\ref{eq:SysPerform}).
\begin{remark}
Theorems 1, 2, and 3 allow us to calculate the joint stationary distribution by a numerically stable algorithm because we deal with only positive numbers.
\end{remark}

\section{Simulation and Numerical Results} \label{sec:Simulation_Results}
This section provides both simulation and numerical results for the analytical model addressed in Section~\ref{sec:Analytical_Model}. The analytical model is cross-validated by extensive simulations by using ns2, version 2.35~\cite{ns2} with real measurement results for parameter configuration\footnote{Due to simulation time limitation, $\lambda$ and $\mu$ are scaled down accordingly with the same ratio $\lambda/\mu$.}:  $\lambda$ by Facebook data center traffic~\cite{lambda}, $\mu$ by the base service rate of a Amazon EC2 VM~\cite{gilani2015application}, and $\alpha$ by the average VM startup time~\cite{alpha}.  If not further specified, the following parameters are set as the default values for performance comparison: $n_0 = 110$, $\mu = 1$, $\alpha = 0.005$, $K = 250$, $\lambda = 50 \thicksim 250$ (see Table 1 for details). The results are based on exponential distribution for job request inter arrival time and VNF instance service time with mean $1/\lambda$ and $1/\mu$. The simulation time is 300,000 seconds. And $15 \thicksim 75$ millions job requests were generated during the extensive simulations.

\begin{figure*}
\vspace{-3mm}
	\centering
	\begin{subfigure}[b]{0.45\textwidth}
		\includegraphics[width=8cm]{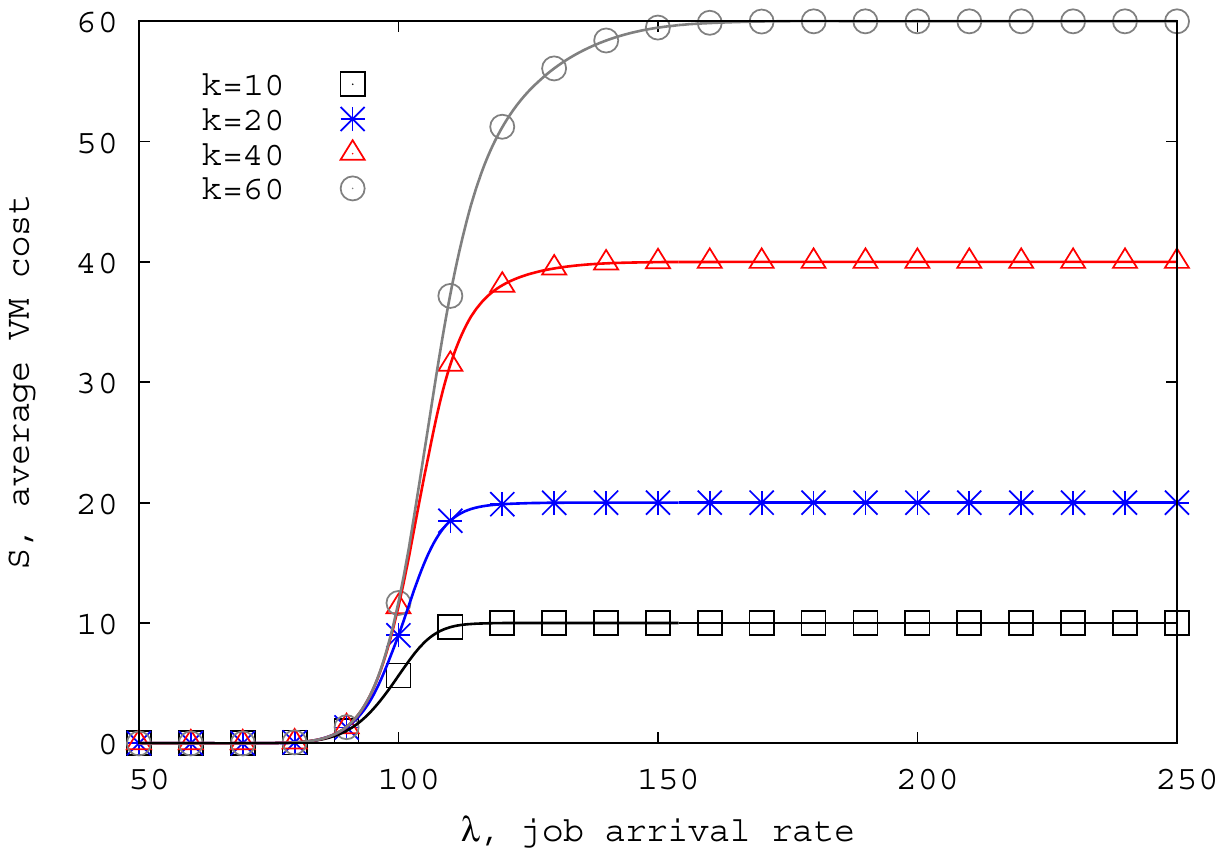}
		\caption{Impacts  on $S$.}
		\label{sfig:S_k}
	\end{subfigure}
		\begin{subfigure}[b]{0.45\textwidth}
			\includegraphics[width=8cm]{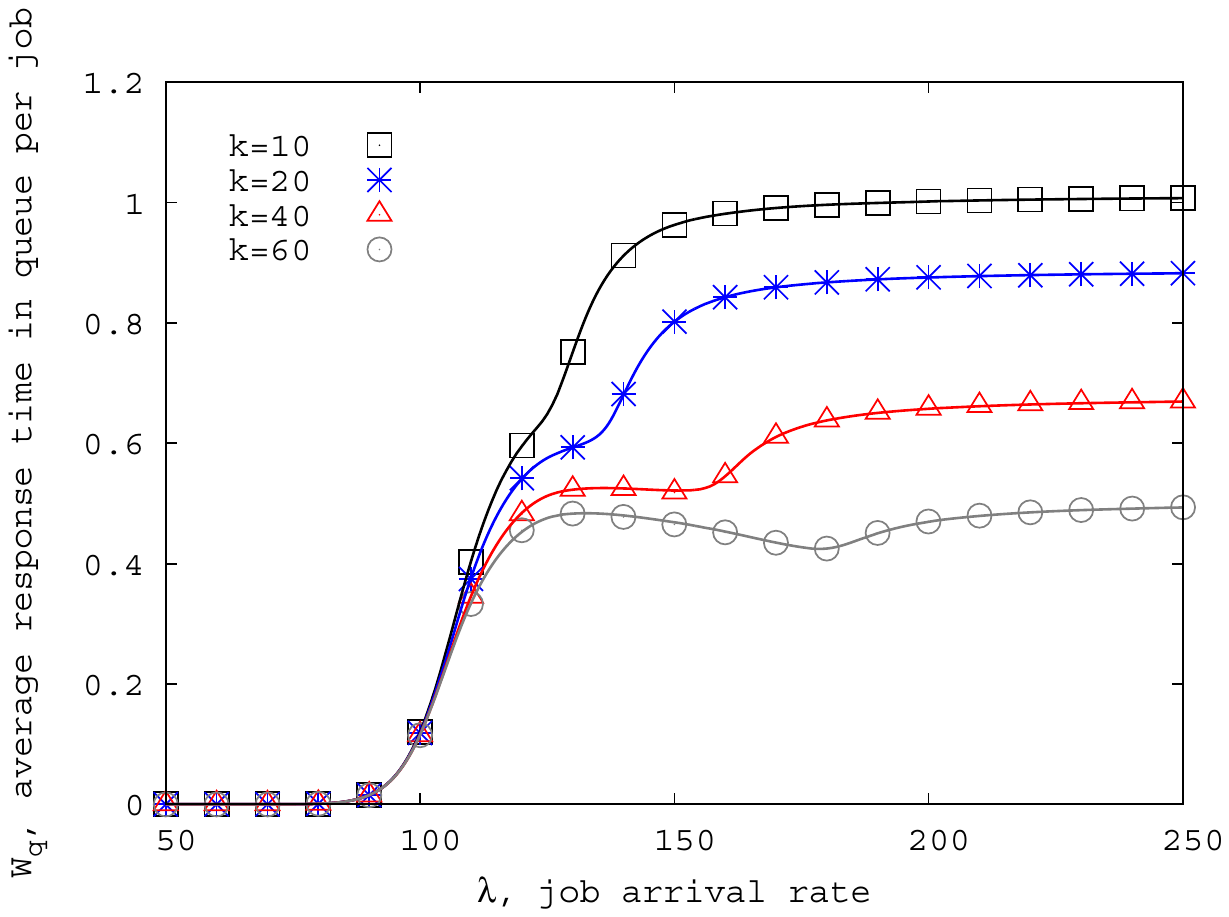}
			\caption{Impacts  on $Wq$.}
			\label{sfig:Wq_k}
		\end{subfigure}
			\begin{subfigure}[b]{0.45\textwidth}
				\includegraphics[width=8cm]{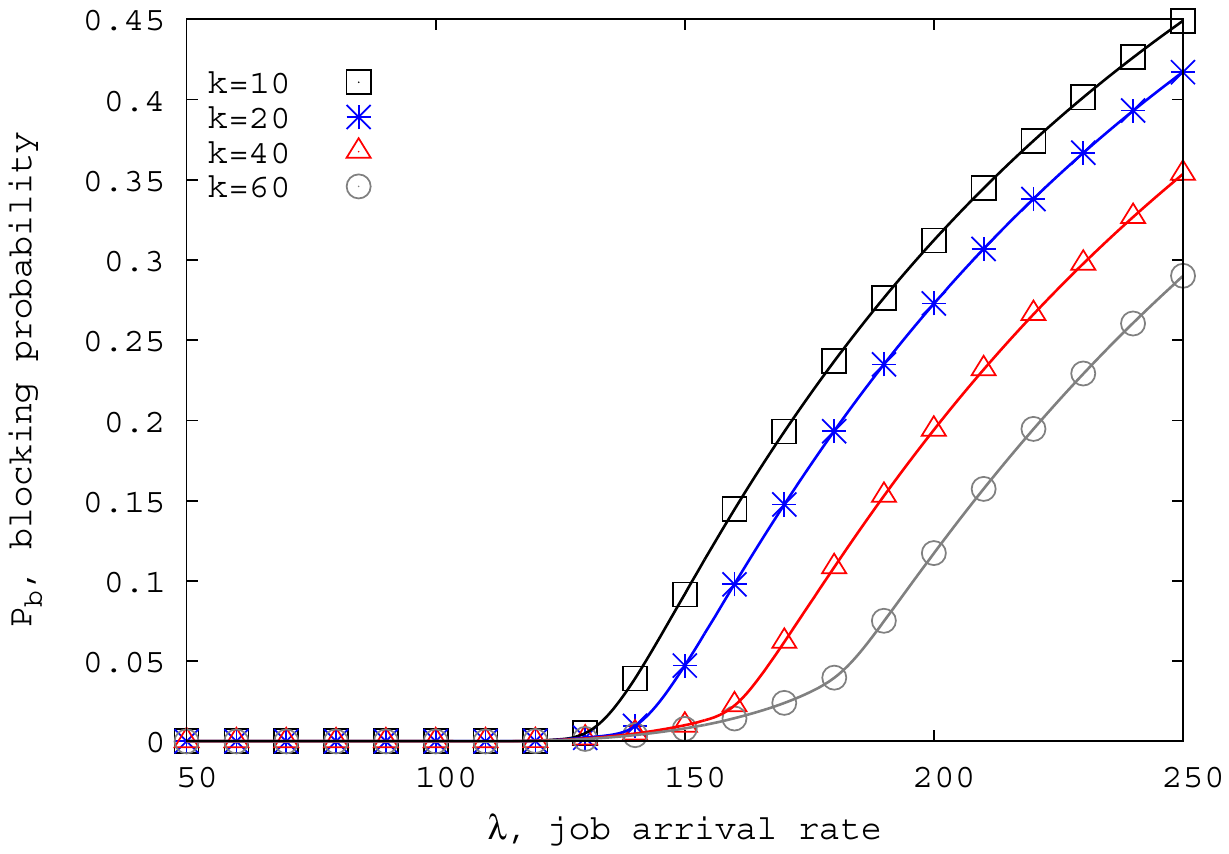}
				\caption{Impacts on $P_b$.}
				\label{sfig:Pb_k}
			\end{subfigure}
			\begin{subfigure}[b]{0.45\textwidth}
				\includegraphics[width=8cm]{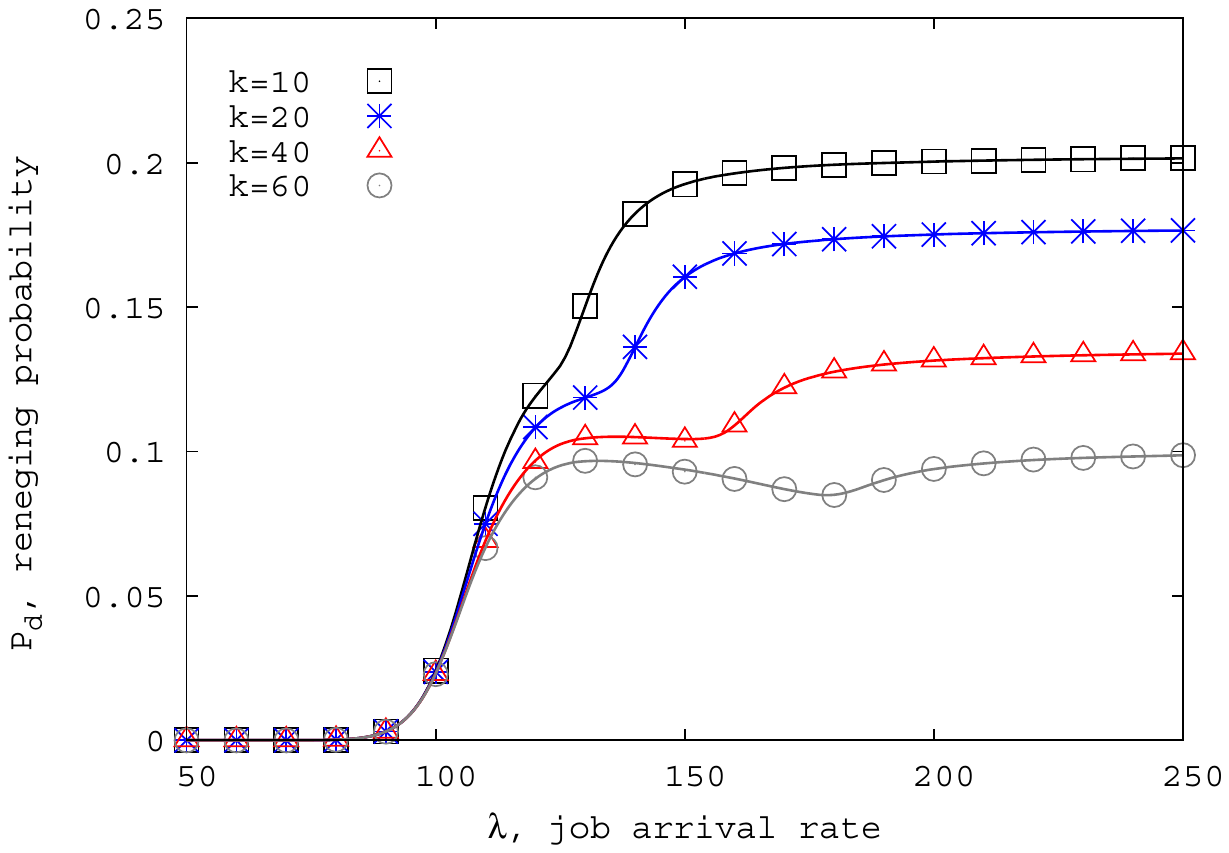}
				\caption{Impacts on $P_{d}$.}
				\label{sfig:Pd_k}
			\end{subfigure}
	\caption{Impacts of $k$ on the performance metrics ($n_0 = 100$).}
	\label{fig:Impacts_of_k}
\vspace{-4mm}
\end{figure*}
\begin{figure*}
	\centering
	\begin{subfigure}[b]{0.45\textwidth}
		\includegraphics[width=8cm]{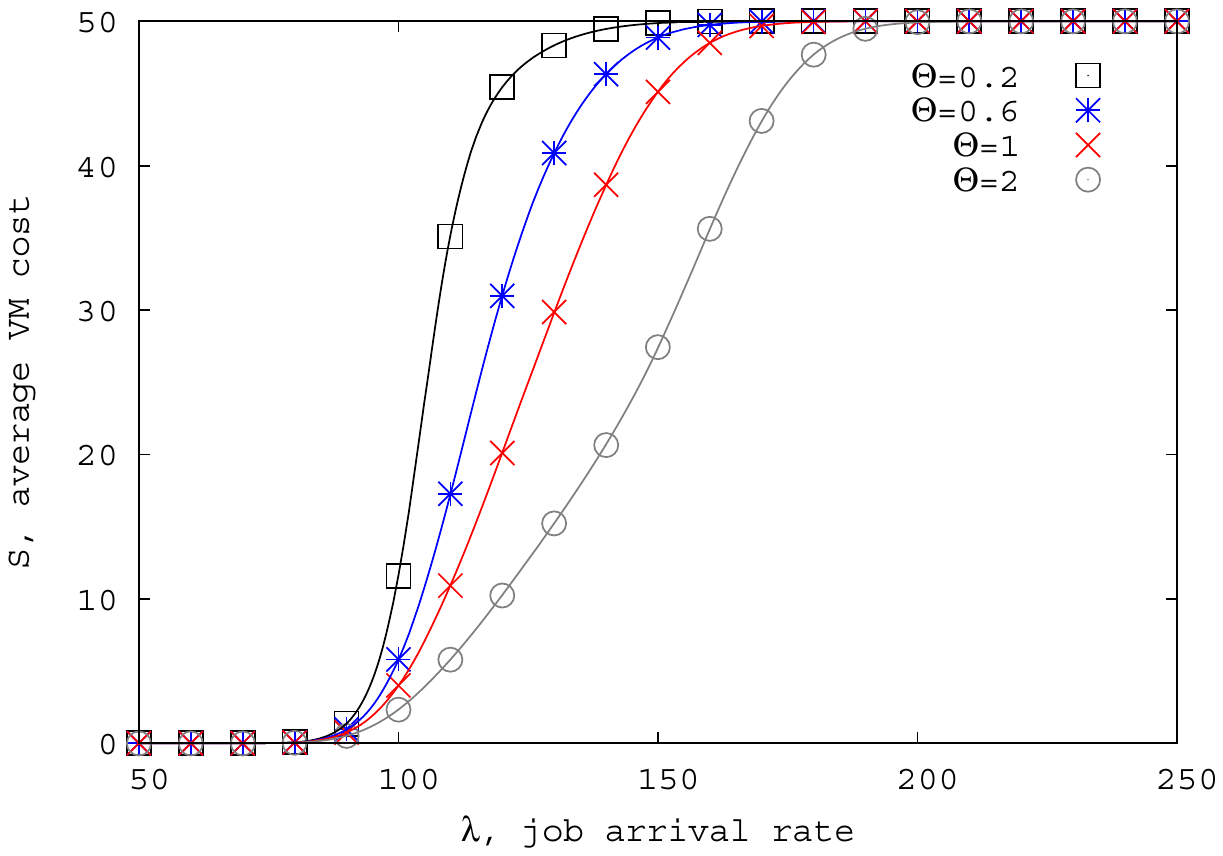}
		\caption{Impacts  on $S$.}
		\label{sfig:S_theta}
	\end{subfigure}
	\begin{subfigure}[b]{0.45\textwidth}
		\includegraphics[width=8cm]{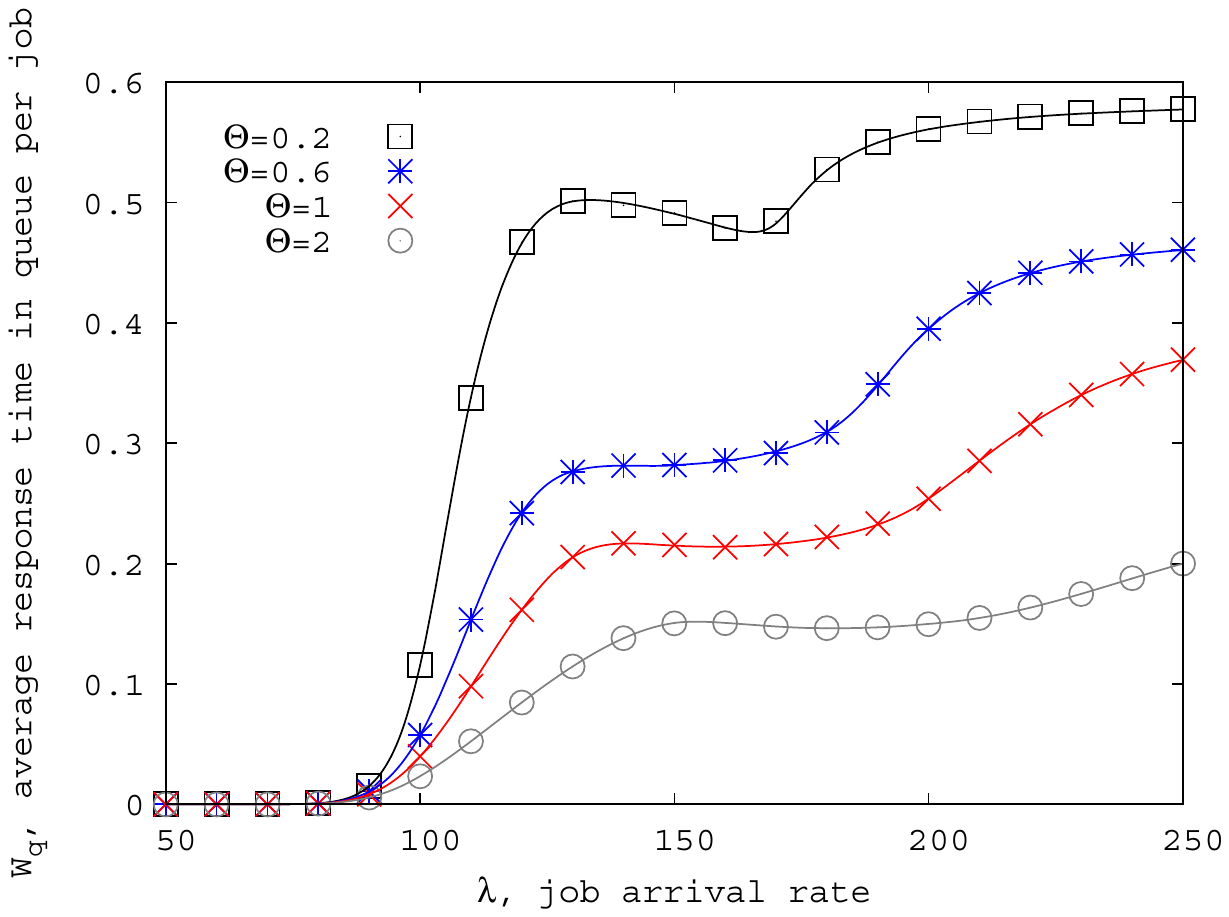}
		\caption{Impacts on $Wq$.}
		\label{sfig:Wq_theta}
	\end{subfigure}
	\begin{subfigure}[b]{0.45\textwidth}
		\includegraphics[width=8cm]{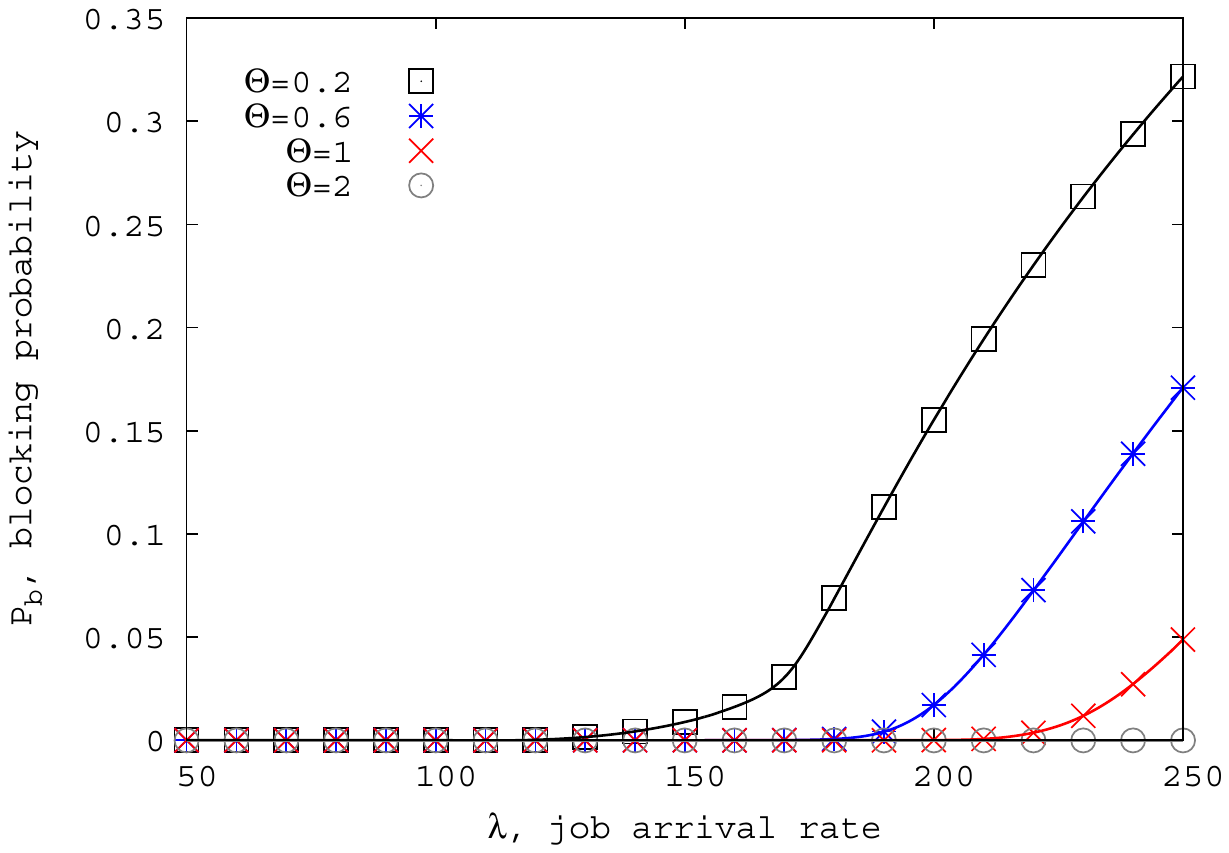}
		\caption{Impacts on $P_b$.}
		\label{sfig:Pb_theta}
	\end{subfigure}
	\begin{subfigure}[b]{0.45\textwidth}
		\includegraphics[width=8cm]{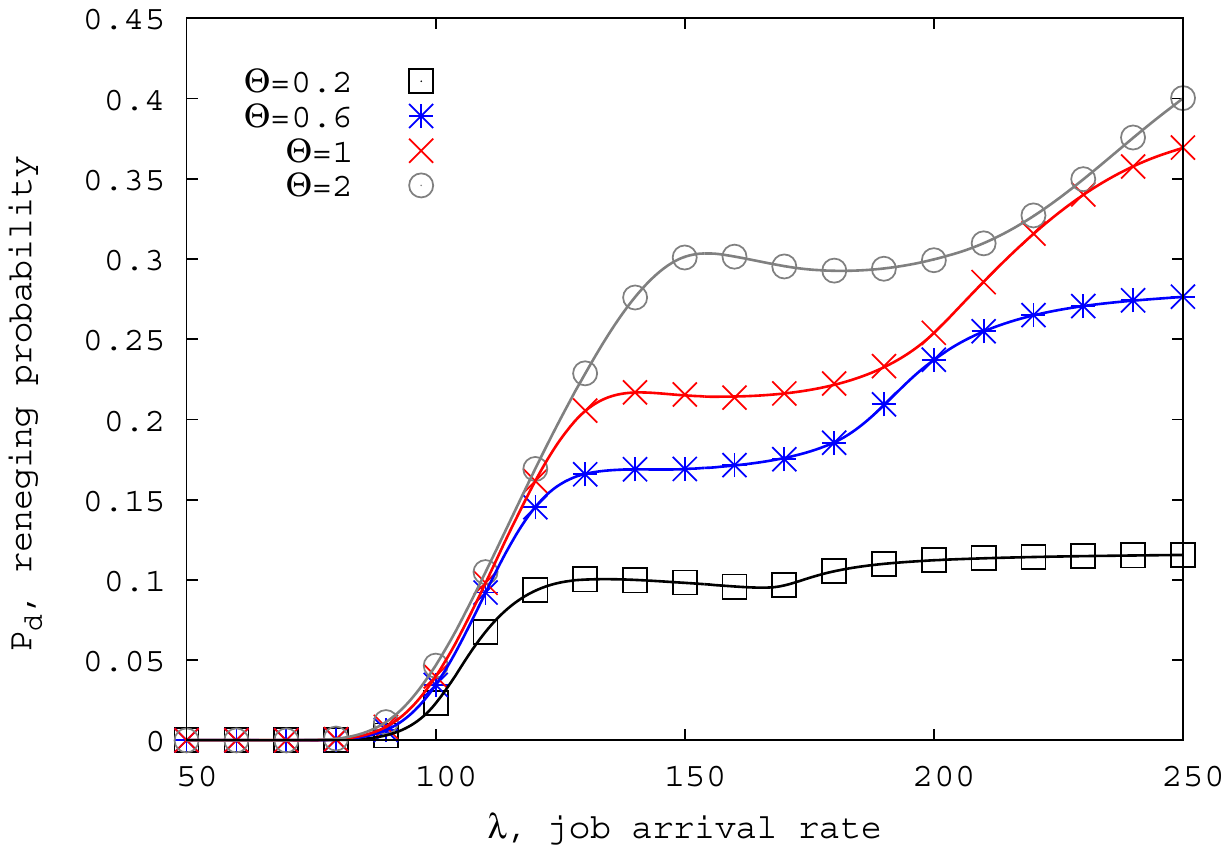}
		\caption{Impacts on $P_{d}$.}
		\label{sfig:Pd_theta}
	\end{subfigure}
	\caption{Impacts of $\theta$ on the performance metrics ($n_0=100$, $k = 50$).}
	\label{fig:Impacts_of_theta}
	\vspace{-4mm}
\end{figure*}
\begin{figure*}
	\centering
	\begin{subfigure}[b]{0.45\textwidth}
		\includegraphics[width=8cm]{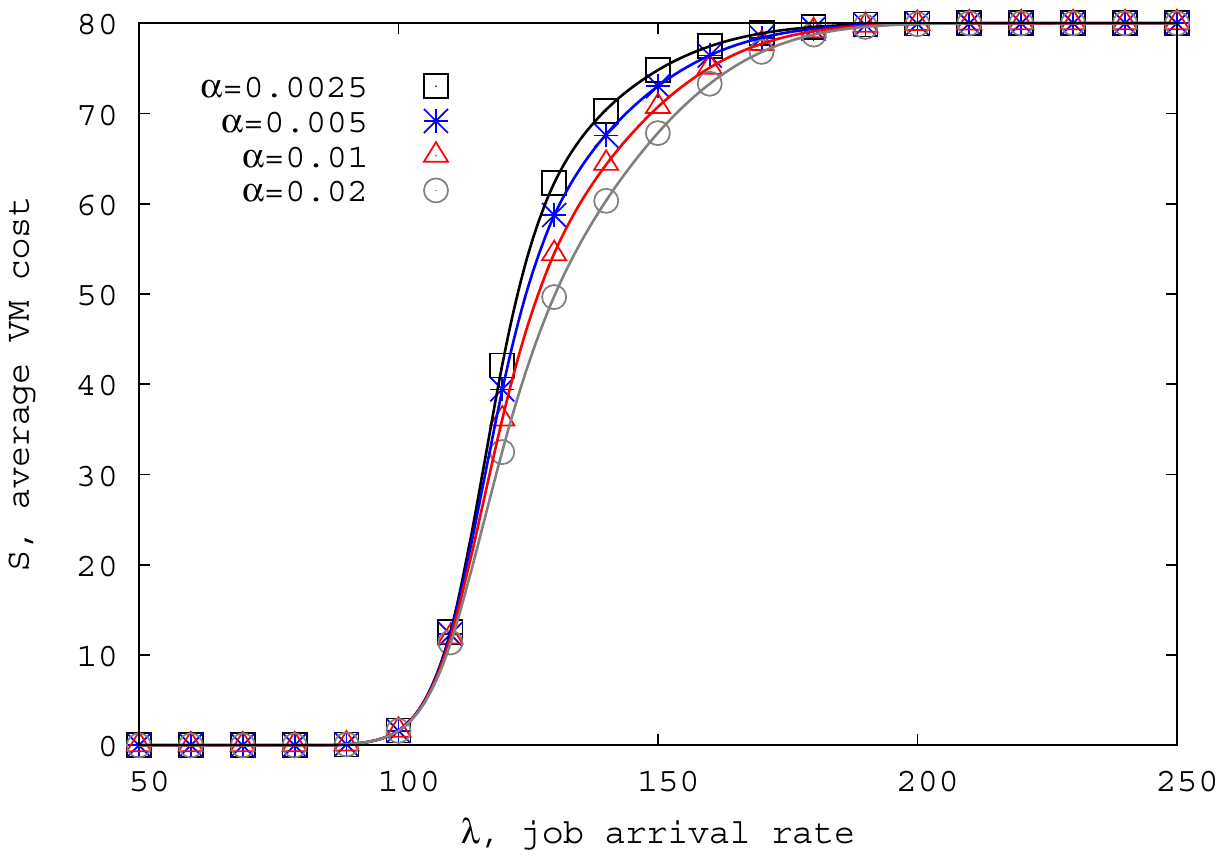}
		\caption{Impacts  on $S$.}
		\label{sfig:S_alpha}
	\end{subfigure}
	\begin{subfigure}[b]{0.45\textwidth}
		\includegraphics[width=8cm]{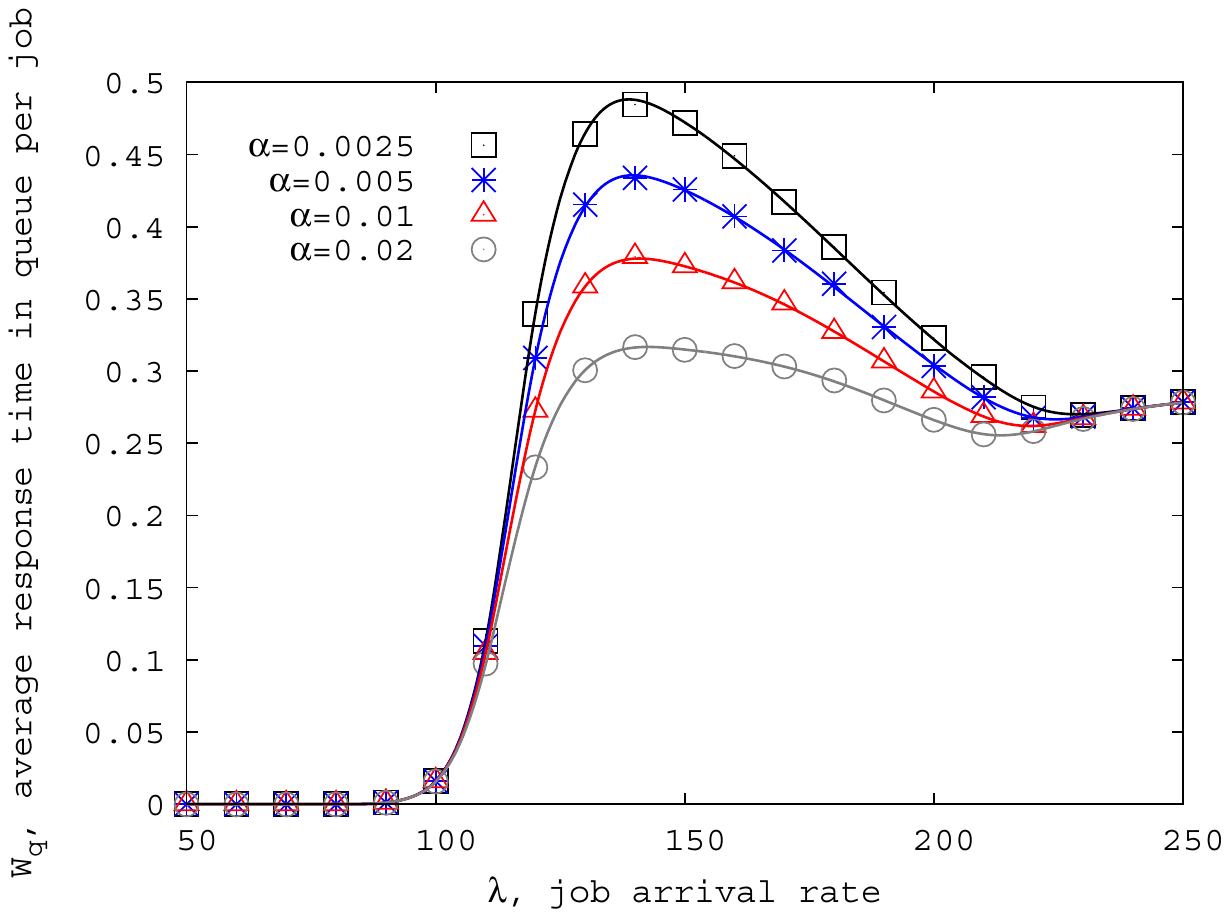}
		\caption{Impacts  on $Wq$.}
		\label{sfig:Wq_alpha}
	\end{subfigure}
		\begin{subfigure}[b]{0.45\textwidth}
			\includegraphics[width=8cm]{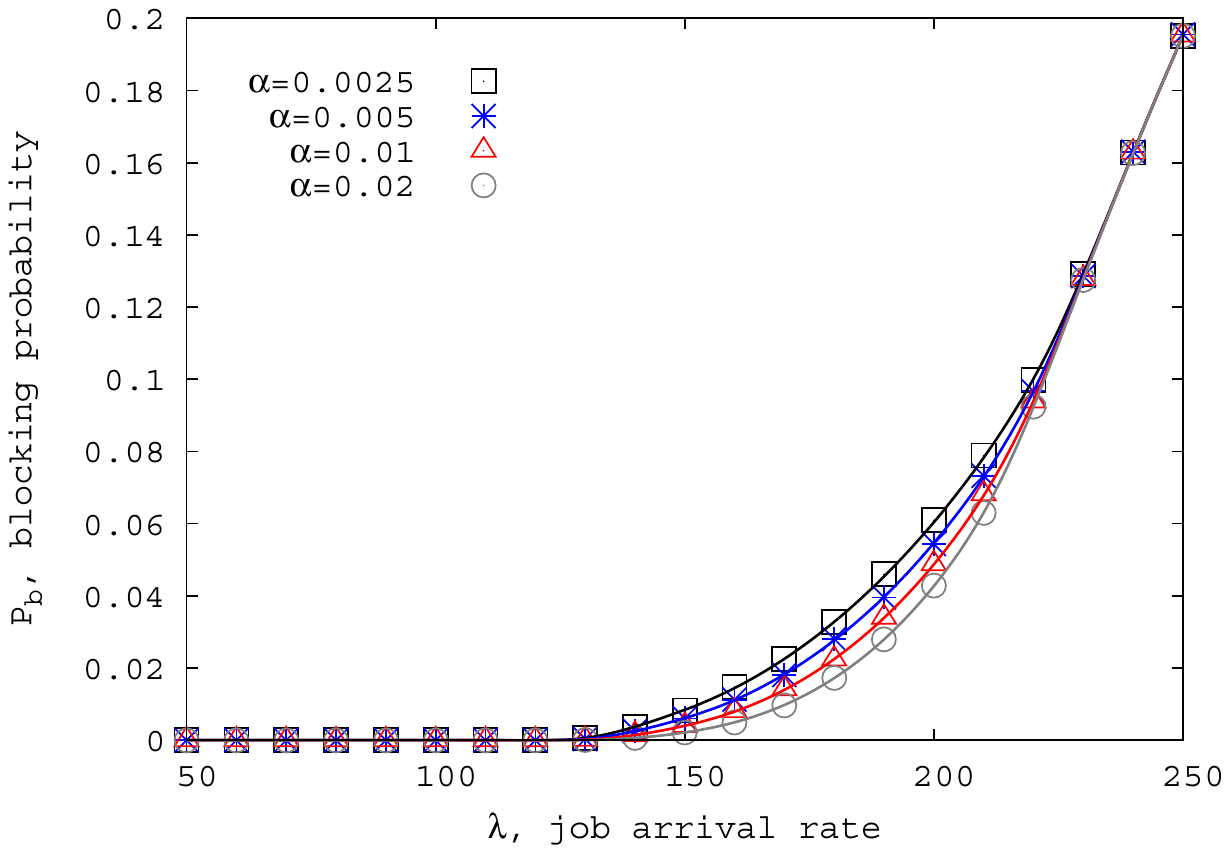}
			\caption{Impacts  on $P_b$.}
			\label{sfig:Pb_alpha}
		\end{subfigure}
		\begin{subfigure}[b]{0.45\textwidth}
			\includegraphics[width=8cm]{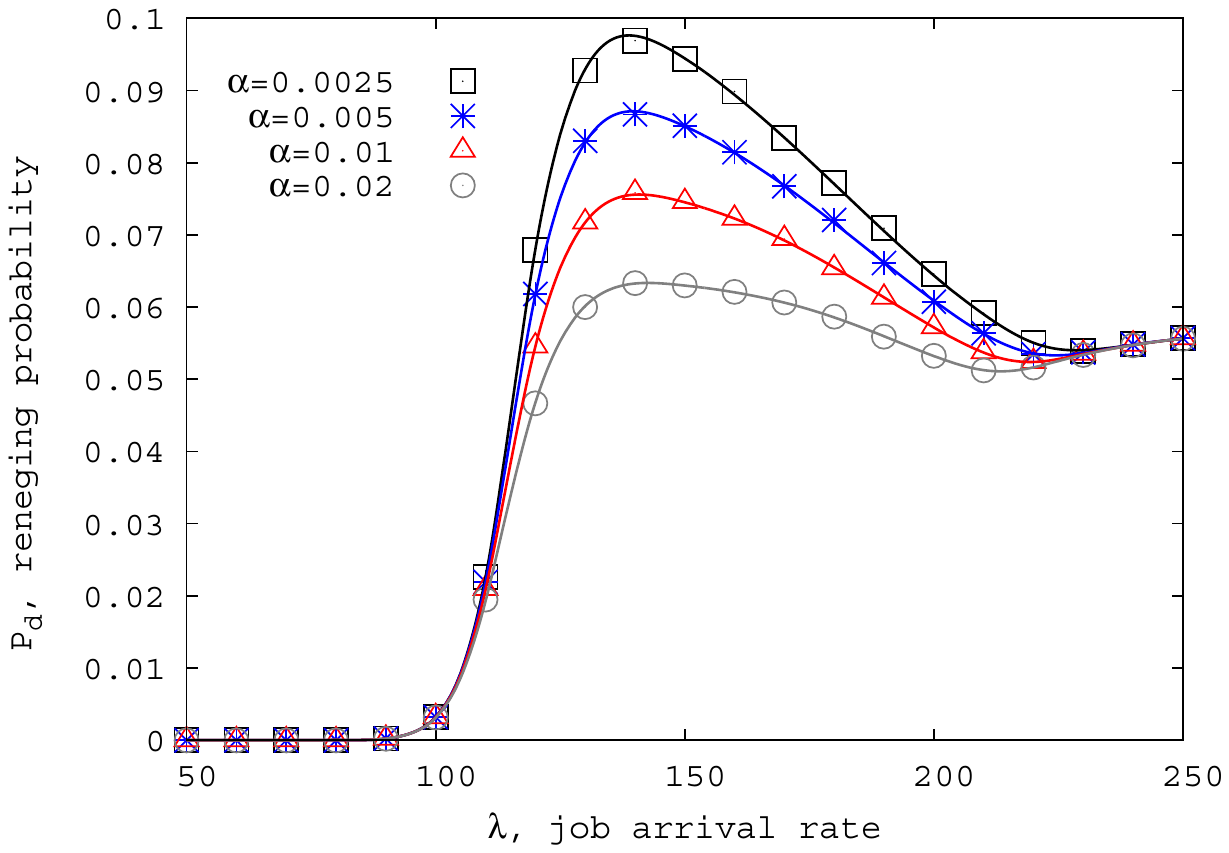}
			\caption{Impacts  on $P_{d}$.}
			\label{sfig:Pd_alpha}
		\end{subfigure}
	\caption{Impacts of $\alpha$ on the performance metrics  ($k = 80$).}
	\label{fig:Impacts_of_alpha}
\vspace{-4mm}
\end{figure*}
\begin{figure*}
	\centering
	\begin{subfigure}[b]{0.45\textwidth}
		\includegraphics[width=8cm]{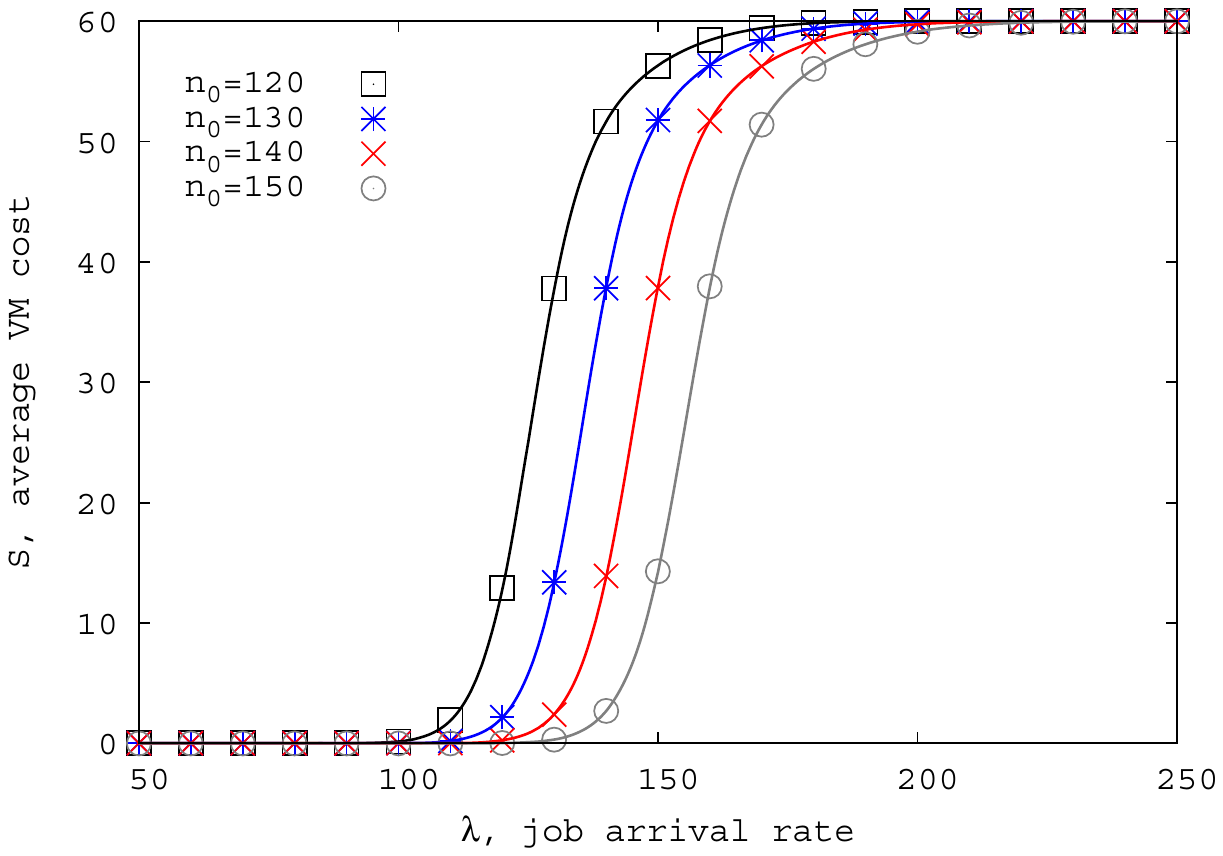}
		\caption{Impacts  on $S$.}
		\label{sfig:S_n0}
	\end{subfigure}
		\begin{subfigure}[b]{0.45\textwidth}
			\includegraphics[width=8cm]{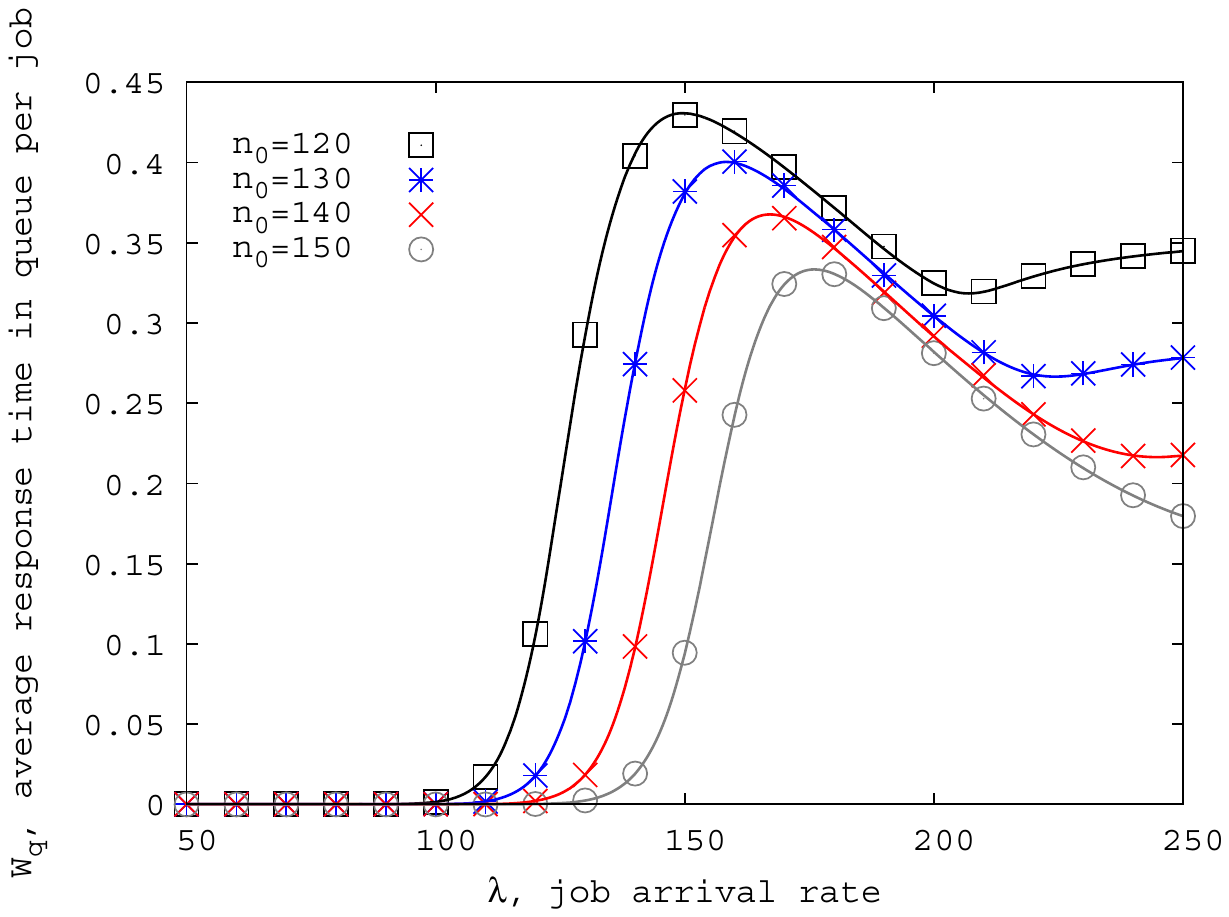}
			\caption{Impacts  on $Wq$.}
			\label{sfig:Wq_n0}
		\end{subfigure}
	\begin{subfigure}[b]{0.45\textwidth}
		\includegraphics[width=8cm]{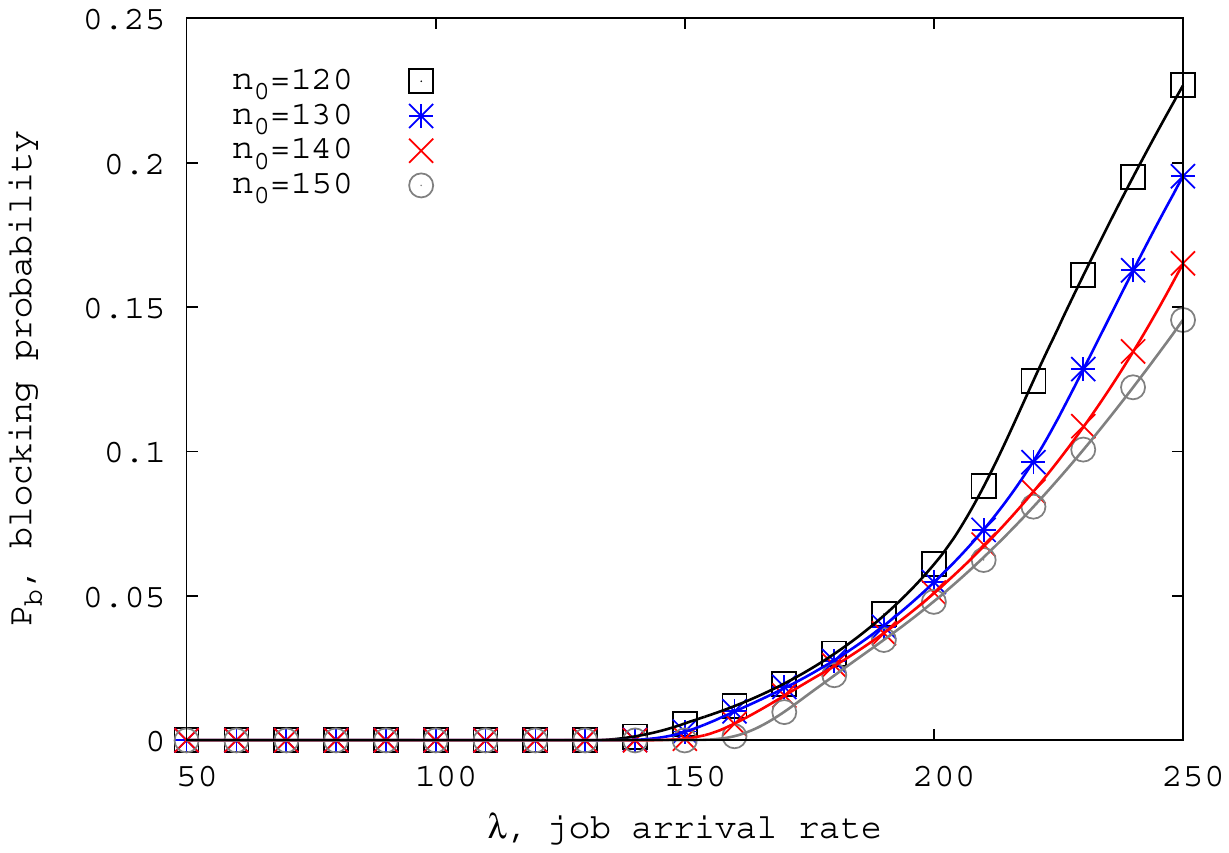}
		\caption{Impacts  on $P_b$.}
		\label{sfig:Pb_n0}
	\end{subfigure}
	\begin{subfigure}[b]{0.45\textwidth}
		\includegraphics[width=8cm]{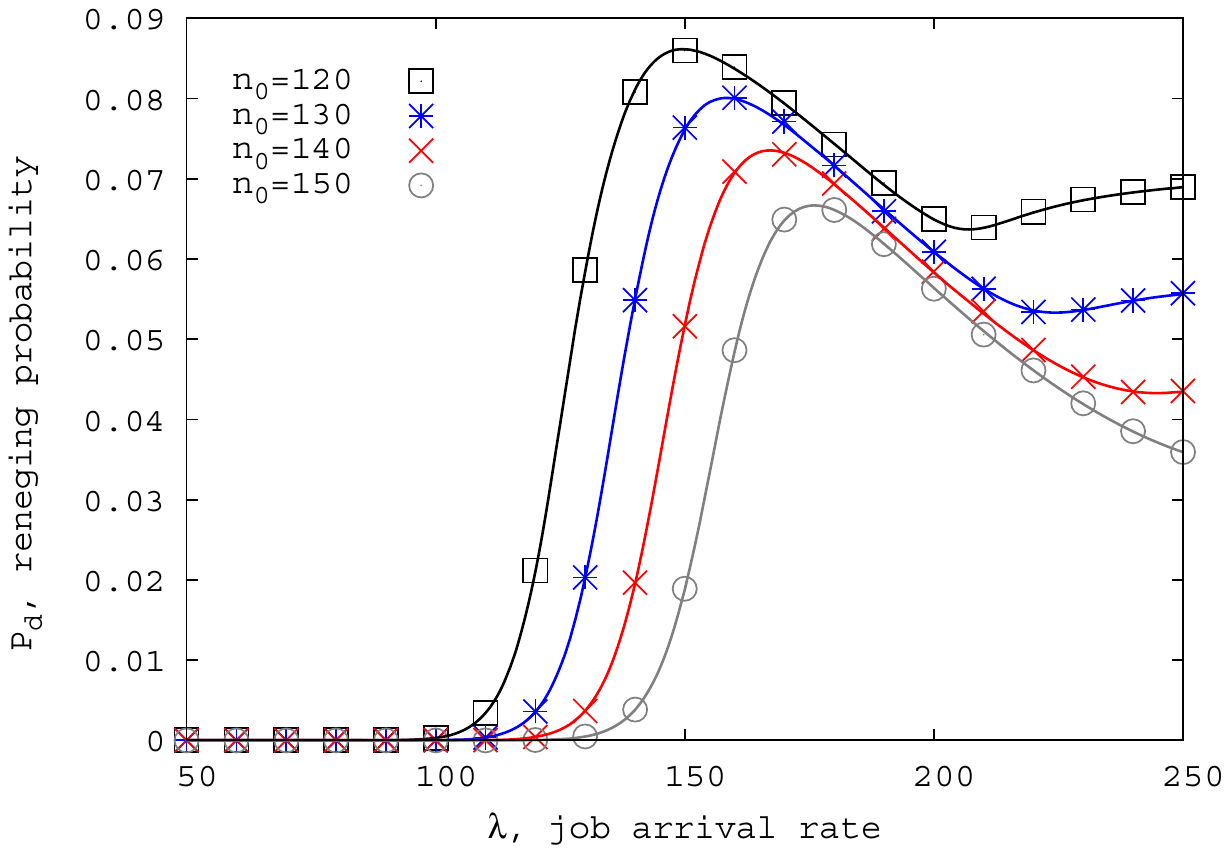}
		\caption{Impacts  on $P_{d}$.}
		\label{sfig:Pd_n0}
	\end{subfigure}
	\caption{Impacts of $n_0$ on the performance metrics ($k = 60$).}
	\label{fig:Impacts_of_n0}
\vspace{-4mm}
\end{figure*}
\begin{figure*}
	\centering
	\begin{subfigure}[b]{0.45\textwidth}
		\includegraphics[width=8cm]{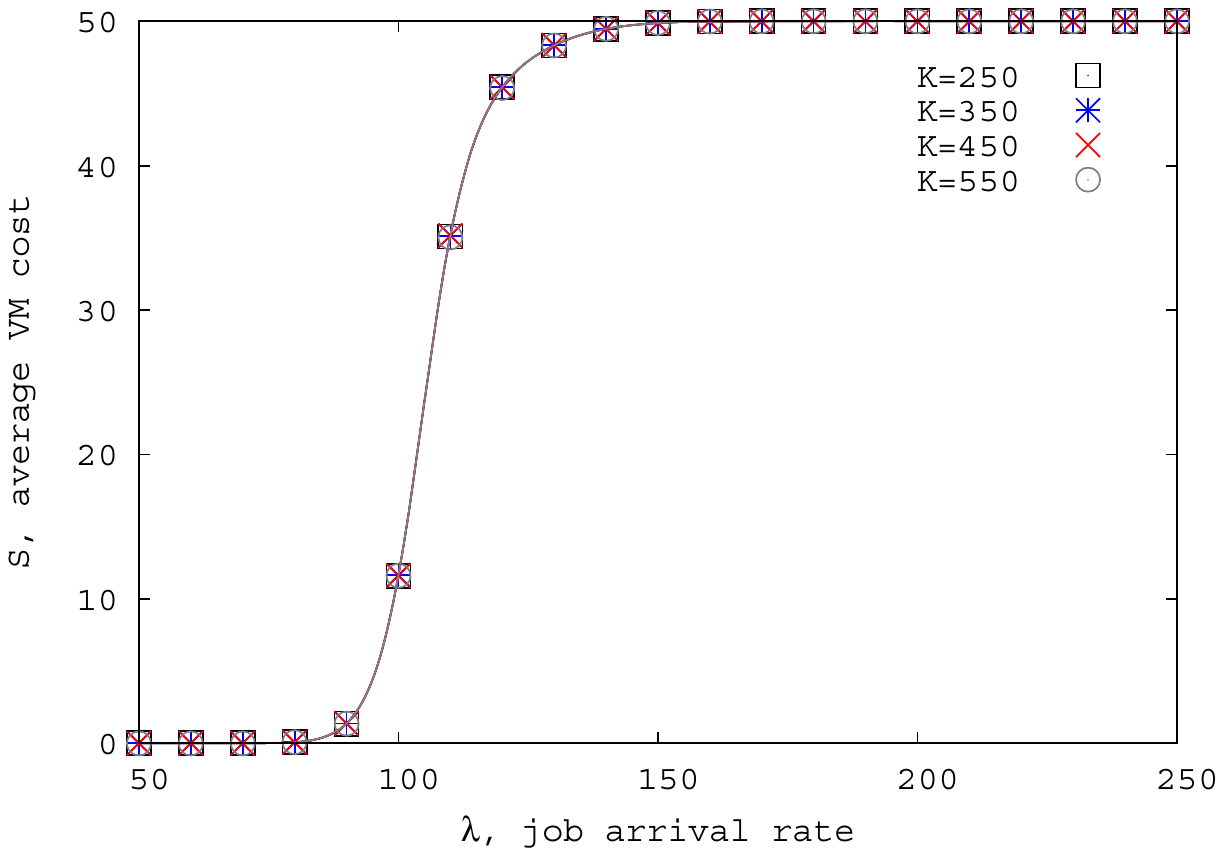}
		\caption{Impacts  on $S$.}
		\label{sfig:S_K}
	\end{subfigure}
	\begin{subfigure}[b]{0.45\textwidth}
		\includegraphics[width=8cm]{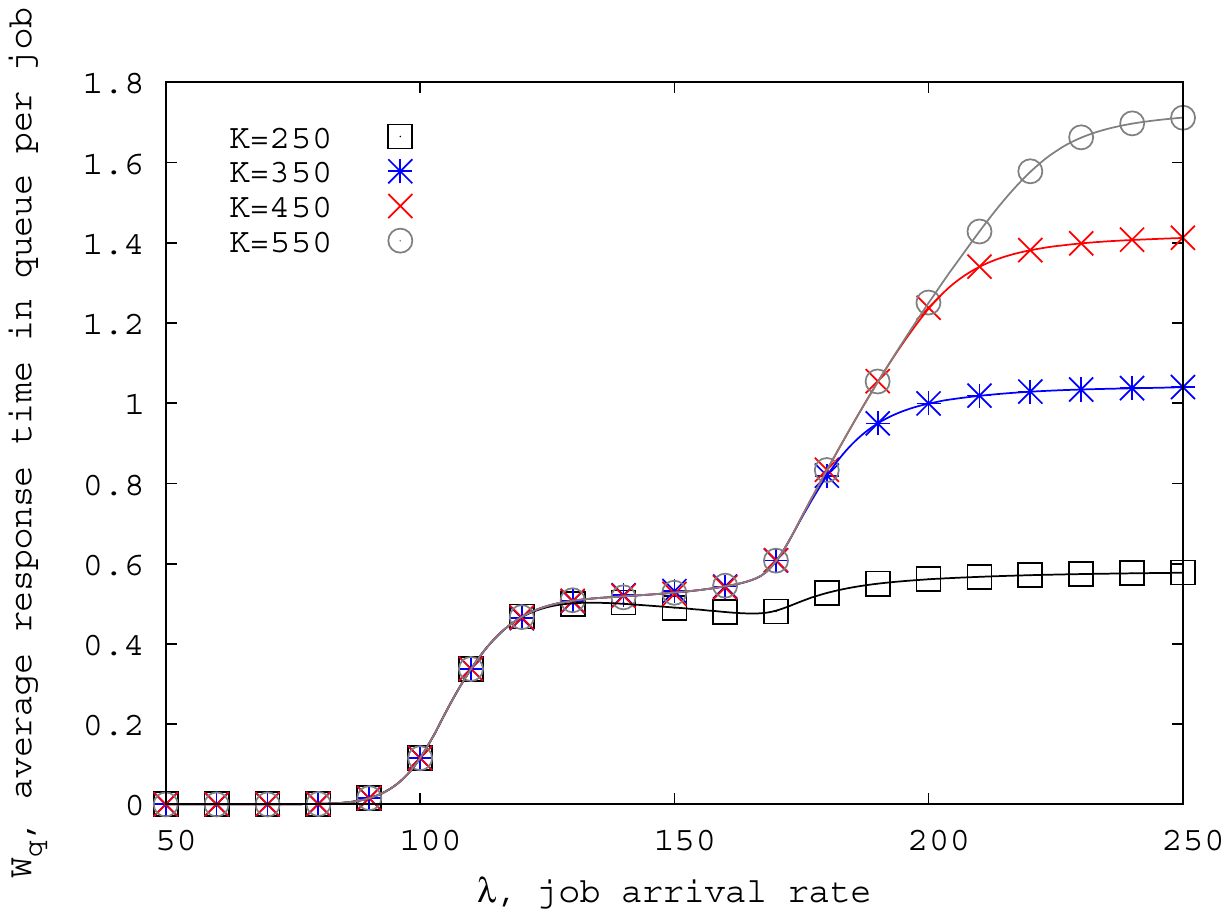}
		\caption{Impacts  on $Wq$.}
		\label{sfig:Wq_K}
	\end{subfigure}
	\begin{subfigure}[b]{0.45\textwidth}
		\includegraphics[width=8cm]{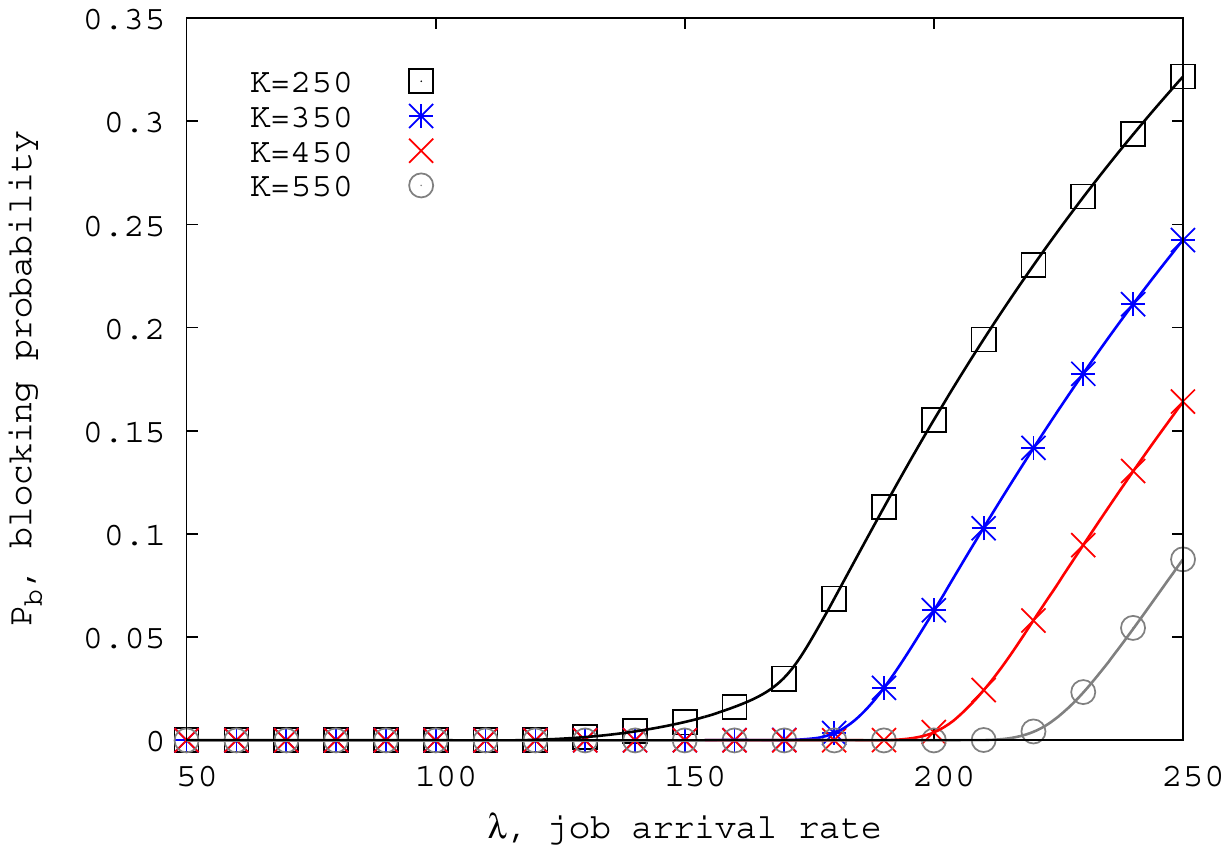}
		\caption{Impacts  on $P_b$.}
		\label{sfig:Pb_K}
	\end{subfigure}
	\begin{subfigure}[b]{0.45\textwidth}
		\includegraphics[width=8cm]{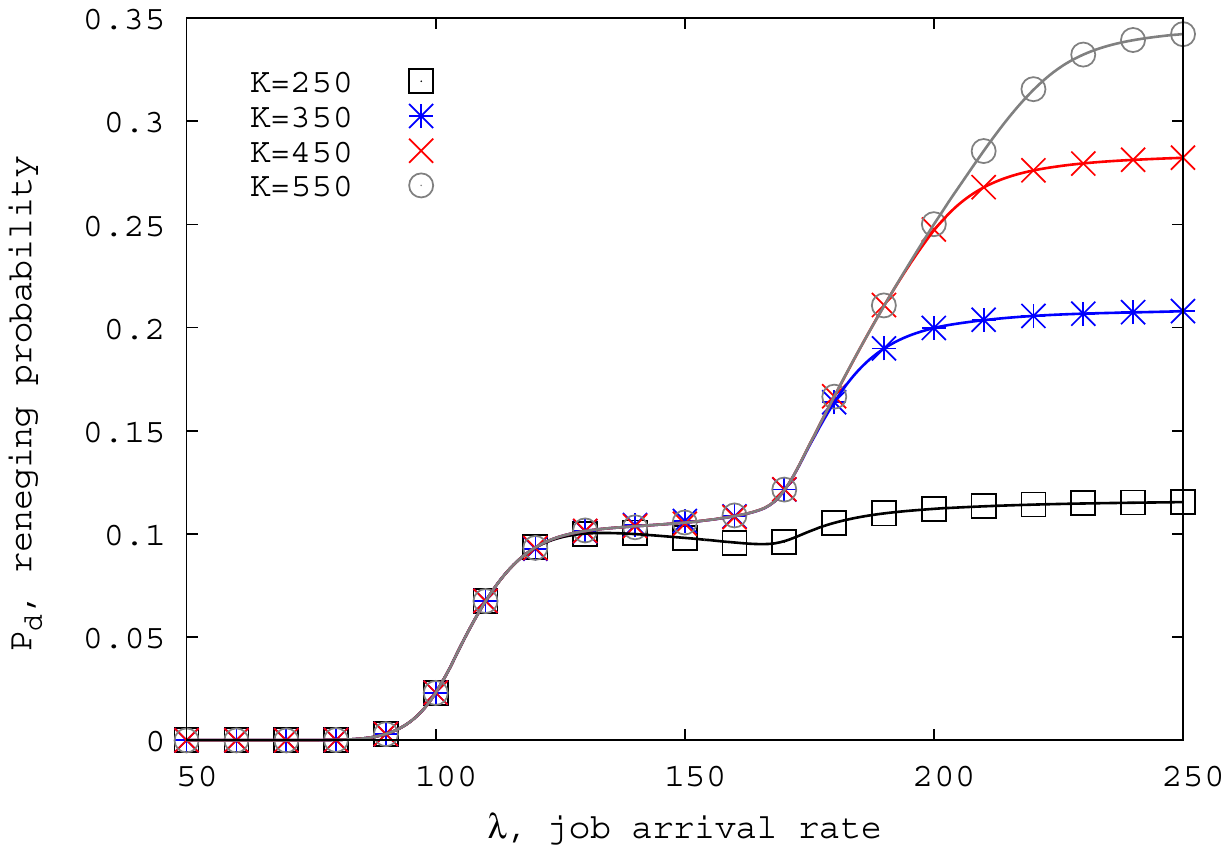}
		\caption{Impacts  on $P_{d}$.}
		\label{sfig:Pd_K}
	\end{subfigure}
	\caption{Impacts of $K$ on the performance metrics  ($k = 50$).}
	\label{fig:Impacts_of_K}
	\vspace{-4mm}
\end{figure*}

Figs.~\ref{fig:Impacts_of_k}, \ref{fig:Impacts_of_theta}, \ref{fig:Impacts_of_alpha}, \ref{fig:Impacts_of_n0},  \ref{fig:Impacts_of_K} not only demonstrate the correctness of our analytical model, but also illustrate the impacts of $\lambda$, $k$, $\theta$,  $\alpha$, $n_0$, $K$ on the performance metrics: average VM cost $S$, average response time in queue $W_q$, blocking probability $P_b$, and dropping probability $P_d$, respectively. \emph{In the figures, the \textit{lines} denote analytical results, and the \textit{points} represent simulation results.}  Each simulation result in the figures is the mean value of the results in 300,000 seconds with 95\% confidence level.

\subsection{Impacts of Arrival Rate $\lambda$}
We first look into the impacts of job request arrival rate $\lambda$. Mobile operators cannot adjust $\lambda$ but are able to monitor it and configure network parameters $k$, $\theta$,   $\alpha$, $n_0$, and $K$ for network optimization accordingly.

Figs.~\ref{sfig:S_k}, \ref{sfig:S_theta}, \ref{sfig:S_alpha}, \ref{sfig:S_n0}, \ref{sfig:S_K} depict the impacts of $\lambda$ on  $S$. In general, one can see that $S$ initiates at 0 at the beginning and then starts to raise sharply when $\lambda$ passes $n_o\mu$.  The reason is that the incoming job requests are served by the legacy equipment when $\lambda < n_0\mu$. No VMs are powered on. Then DBCA starts to turn on VMs to handle job requests as $\lambda$ is increasing.  Later, $S$ reaches at a bound even if $\lambda$ continues growing. This is because all the $k$ VMs are turned on so that $S$ is bounded as $k$ VM costs.

Figs.~\ref{sfig:Wq_k}, \ref{sfig:Wq_theta}, \ref{sfig:Wq_alpha}, \ref{sfig:Wq_n0}, \ref{sfig:Wq_K} show the impacts of $\lambda$ on $W_q$. Interestingly, the trend of the curves can generally be  divided into four phases: zero phase, ascent phase, descent phase, and saturation phase. In the zero phase, $W_q$ is zero because incoming jobs are served immediately by available capacities. In the ascent phase, $W_q$ raises sharply due to the setup time of VMs. Specifically, when $\lambda$ approaches to $n_0\mu$ and then larger than $n_0\mu$, VMs start to be powered on and to serve jobs. In doing so, however, $W_q$ still grows sharply because jobs have to wait for turning on processes of VMs. Later, $W_q$ starts to decrease due to new running VMs as shown as the third (descent) phase.  In the forth (ascent) phase, $W_q$ starts to grow again and then saturates at a bound. The reason of ascent is that the system is not able to serve the coming jobs when $\lambda \geq (n_0+k)\mu$. Finally, the curves reach to saturation because the capacity of the system is too full to handle the jobs and the value of $W_q$ is limited by the total system capacity $K$.

In Figs.~\ref{sfig:Pb_k}, \ref{sfig:Pb_theta}, \ref{sfig:Pb_alpha}, \ref{sfig:Pb_n0}, and~\ref{sfig:Pb_K}, we study the impacts of $\lambda$ on $P_b$. The trends of the curves are relatively simple compared with the above two metrics. Generally, the curves are growing as $\lambda$ increases. In particular, $P_b$ initiates at 0 and starts to increase when $\lambda>(n_0+k)\mu$. The reason is that the system starts to reject jobs when the queue is full.

Figs.~\ref{sfig:Pd_k}, \ref{sfig:Pd_theta}, \ref{sfig:Pd_alpha}, \ref{sfig:Pd_n0}, \ref{sfig:Pd_K} illustrate the impacts of $\lambda$ on $P_d$. One can see that the trends of the curves are similar with that of $W_q$. Note that job requests start to quit the queue if the waiting time exceeds their deadline constraints. So $P_d$ is highly related to $W_q$. If $W_q$ is large then jobs are dropped with high probability. This also explains why the trends are similar. Please refer to the above discussion of  $W_q$ for  $P_d$.

\subsection{Impacts of the Number of VNF Instances $k$} \label{ssec:Impacts_of_k}
The figures in Fig.~\ref{fig:Impacts_of_k} depict the impacts of $k$ on performance metrics $S$, $W_q$, $P_b$, and $P_d$, respectively. We can see that increasing $k$ from 10 to 60 leads to the gains of $S$ while decreasing $W_q$, $P_b$, and $P_d$ accordingly. A larger $k$ means that more VMs could be used to handle the growing job requests so $W_q$, $P_b$, and $P_d$ are improved. If a operator wants to adjust budget constraint $S$, the operator can specify a suitable $k$ based on (\ref{eq:S}).

\subsection{Impacts of Abandon Rate $\theta$}
In Figs.~\ref{sfig:S_theta},  \ref{sfig:Wq_theta}, \ref{sfig:Pb_theta}, and~\ref{sfig:Pd_theta}, we study the impacts of abandon rate $\theta$ on $S$, $W_q$, $P_b$, and $P_d$, respectively. Recall that a job request is assumed to have a deadline constraint with mean $1/\theta$, meaning that the job will stop waiting in the queue if the waiting time exceeds its deadline. We observe that increasing $\theta$ decreases $S$, $W_q$, and $P_b$ while enlarging $P_d$. Specifically, as shown in Fig.~\ref{sfig:S_theta}, $\theta$ has no impacts on $S$ when $\lambda < n_0\mu$ or $\lambda>(n_0+k)\mu$. The reason is that $S$ only depends on the number of running VMs. Whereas, when $n_0\mu < \lambda<(n_0+k)\mu$, a larger $\theta$ leads to less $S$ because more jobs are dropped from the system. In addition, the impacts of $\theta$ on $W_q$ is illustrated in Fig.~\ref{sfig:Wq_theta}. A larger $\theta$ makes a smaller $W_q$. The reason is that when more jobs quit from the queue, the rest of the jobs need to wait less time. Fig.~\ref{sfig:Pb_theta} shows that increasing $\theta$ leads to less $P_b$. The reason is straightforward. More jobs quitting from the queue means that the system has more available capacities to handle the incoming jobs.  In Fig.~\ref{sfig:Pd_theta}, we observe that a larger $\theta$ means more $P_d$. It coincides with the definition of $P_d$.

\subsection{Impacts of VM Setup Rate $\alpha$}
Figs.~\ref{sfig:S_alpha},  \ref{sfig:Wq_alpha}, \ref{sfig:Pb_alpha}, \ref{sfig:Pd_alpha} illustrate the impacts of $\alpha$ on $S$, $W_q$, $P_b$, and $P_d$, respectively. Recall that VMs are assumed to have a setup time with mean value $1/\alpha$. To reduce the setup time, NFV Management and Orchestration can perform scale-up procedure to add resources (e.g., CPU, memory) to make VMs more powerful. We observe that less setup time decreases $S$, $W_q$, $P_b$, and $P_d$. The reason is that short setup time leads to that VMs can be quicker to be available for handling the jobs, resulting in less operation cost (see Fig.~\ref{sfig:S_alpha}), lower waiting time for jobs (see Fig.~\ref{sfig:Wq_alpha}), smaller blocking probability (see Fig.~\ref{sfig:Pd_alpha}), and reduced dropping probability as shown in Fig.~\ref{sfig:Pd_alpha}.

\subsection{Impacts of Capacities of Legacy Equipment $n_0$}
Figs.~\ref{sfig:S_n0},  \ref{sfig:Wq_n0}, \ref{sfig:Pb_n0}, \ref{sfig:Pd_n0} show the impacts of $n_0$ on $S$, $W_q$, $P_b$, and $P_d$, respectively. We observe that the curves initiate at 0 then stay at 0 for a period and start to grow up as $\lambda$ increases. $n_0$ decides the length of the period when the curves start to ascend. The reason is that the legacy equipment can handle incoming jobs within its capacity. When $\lambda$ exceeds the capacity of the legacy equipment, the performance metrics $S$, $W_q$, $P_b$, and $P_d$ start to grow up.

\subsection{Impacts of System Capacity $K$}
In Figs.~\ref{sfig:S_K},  \ref{sfig:Wq_K}, \ref{sfig:Pb_K}, and~\ref{sfig:Pd_K}, we investigate the impacts of $K$ on $S$, $W_q$, $P_b$, and $P_d$, respectively. As shown in Fig.~\ref{sfig:S_K}, we observe that $K$ has limited impacts on $S$. As we discussed in Section~\ref{ssec:Impacts_of_k}, $S$ is mainly decided by $k$. Figs.~\ref{sfig:Wq_K}, \ref{sfig:Pb_K}, and~\ref{sfig:Pd_K} show that  $K$ has significant impacts on $W_q$, $P_b$ as well as $P_d$. Different $K$ makes huge gaps between the curves.  Moreover, a large $K$ leads to a larger $W_q$ as well as $P_d$ but makes $P_b$ smaller. The reason is that it enables more jobs waiting in the queue rather than dropping them.

\section{Conclusions} \label{sec:Conclusions}
In this paper, we have proposed DBCA for addressing the tradeoff between operation budget constraint $S$ and system performance which is evaluated by three performance metrics: the average job response time $W_q$,  blocking probability $P_b$, and dropping probability $P_d$.  Our work addresses the research gap by considering both VM setup time and the capacity of legacy equipment  in NFV enabled EPC scenarios.  Compared with our previous work~\cite{YiRen2016globecom}, the model quantifies a more practical case.  Our results show that the analytical model provides a quick way to help mobile operators to plan and design network optimization strategies without wide deployment, saving on cost and time. Moreover, based on our analytical model, mobile operators can easily estimate operation budget given desired system performance, vice versa.

As our future work, one extension is to generalize the VM setup time and the arrival and service time. Right now there is no literature to support that they are exponential random variables. These results could be generalized by using orthogonal polynomial approaches~\cite{pender2014gram}. Also, we plan to relax the assumption of VM scaling in/out capability, i.e., from one VNF instance per time to arbitrary instances per time.  We plan to complete these works in  follow-up papers.

\bibliographystyle{IEEEtran}
\bibliography{IEEEACM,NFV}
\balance

\end{document}